\let\MYcaption\@makecaption
\let\@makecaption\MYcaption
\newtheorem{thm}{Theorem}
\newtheorem{cor}{Corollary}
\newtheorem{asmp}{Assumption}
\newtheorem{cond}{Condition}
\newtheorem{lem}{Lemma}
\theoremstyle{definition}
\newtheorem{rem}{Remark}
\newcolumntype{Y}{>{\centering\arraybackslash}X}
\newcolumntype{b}{>{\hsize=1.35\hsize}Y}
\newcolumntype{z}{>{\hsize=1.2\hsize}Y}
\newcolumntype{s}{>{\hsize=.45\hsize}Y}
\newcommand*{\inlineequation}[2][]{%
  \begingroup
    \refstepcounter{equation}%
    \ifx\\#1\\%
    \else
      \label{#1}%
    \fi
    \relpenalty=10000 %
    \binoppenalty=10000 %
    \ensuremath{%
      #2%
    }%
    ~\@eqnnum
  \endgroup
}
\newcommand{\removelatexerror}{\let\@latex@error\@gobble}
\newcounter{parentcond}
\newenvironment{subcond}[1]{%
  \counterwithin*{cond}{parentcond}
  \def\subcondcounter{#1}%
  \refstepcounter{#1}%
  \protected@edef\theparentcond{\csname the#1\endcsname}%
  \setcounter{parentcond}{\value{#1}}%
  \setcounter{#1}{0}%
  \expandafter\def\csname the#1\endcsname{\theparentcond\alph{#1}}%
  \ignorespaces
}{%
  \setcounter{\subcondcounter}{\value{parentcond}}%
  \counterwithout*{cond}{parentcond} 
  \ignorespacesafterend
}
\let\save@mathaccent\mathaccent
\newcommand*\if@single[3]{%
  \setbox0\hbox{${\mathaccent"0362{#1}}^H$}%
  \setbox2\hbox{${\mathaccent"0362{\kern0pt#1}}^H$}%
  \ifdim\ht0=\ht2 #3\else #2\fi
  }
\newcommand*\rel@kern[1]{\kern#1\dimexpr\macc@kerna}
\newcommand*\widebar[1]{\@ifnextchar^{{\wide@bar{#1}{0}}}{\wide@bar{#1}{1}}}
\newcommand*\wide@bar[2]{\if@single{#1}{\wide@bar@{#1}{#2}{1}}{\wide@bar@{#1}{#2}{2}}}
\newcommand*\wide@bar@[3]{%
  \begingroup
  \def\mathaccent##1##2{%
    \let\mathaccent\save@mathaccent
    \if#32 \let\macc@nucleus\first@char \fi
    \setbox\z@\hbox{$\macc@style{\macc@nucleus}_{}$}%
    \setbox\tw@\hbox{$\macc@style{\macc@nucleus}{}_{}$}%
    \dimen@\wd\tw@
    \advance\dimen@-\wd\z@
    \divide\dimen@ 3
    \@tempdima\wd\tw@
    \advance\@tempdima-\scriptspace
    \divide\@tempdima 10
    \advance\dimen@-\@tempdima
    \ifdim\dimen@>\z@ \dimen@0pt\fi
    \rel@kern{0.6}\kern-\dimen@
    \if#31
      \overline{\rel@kern{-0.6}\kern\dimen@\macc@nucleus\rel@kern{0.4}\kern\dimen@}%
      \advance\dimen@0.4\dimexpr\macc@kerna
      \let\final@kern#2%
      \ifdim\dimen@<\z@ \let\final@kern1\fi
      \if\final@kern1 \kern-\dimen@\fi
    \else
      \overline{\rel@kern{-0.6}\kern\dimen@#1}%
    \fi
  }%
  \macc@depth\@ne
  \let\math@bgroup\@empty \let\math@egroup\macc@set@skewchar
  \mathsurround\z@ \frozen@everymath{\mathgroup\macc@group\relax}%
  \macc@set@skewchar\relax
  \let\mathaccentV\macc@nested@a
  \if#31
    \macc@nested@a\relax111{#1}%
  \else
    \def\gobble@till@marker##1\endmarker{}%
    \futurelet\first@char\gobble@till@marker#1\endmarker
    \ifcat\noexpand\first@char A\else
      \def\first@char{}%
    \fi
    \macc@nested@a\relax111{\first@char}%
  \fi
  \endgroup
}
\begin{document}

\title{Distributed Adaptive Signal Fusion for Fractional Programs}

\author{Cem Ates~Musluoglu,
        and~Alexander~Bertrand,~\IEEEmembership{Senior~Member,~IEEE}
\thanks{Copyright \copyright 2025 IEEE. Personal use of this material is permitted. Permission from IEEE must be obtained for all other uses, in any current or future media, including reprinting/republishing this material for advertising or promotional purposes, creating new collective works, for resale or redistribution to servers or lists, or reuse of any copyrighted component of this work in other works.}
\thanks{This project has received funding from the European Research Council (ERC) under the European Union's Horizon 2020 research and innovation programme (grant agreement No 802895 and grant agreement 101138304). The authors also acknowledge the financial support of the FWO (Research Foundation Flanders) for project G081722N, and the Flemish Government (AI Research Program). Views and opinions expressed are however those of the author(s) only and do not necessarily reflect those of the European Union or the granting authorities. Neither the European Union nor the granting authorities can be held responsible for them. }
\thanks{C.A. Musluoglu and A. Bertrand are with KU Leuven, Department of Electrical Engineering (ESAT), Stadius Center for Dynamical Systems, Signal Processing and Data Analytics, Kasteelpark Arenberg 10, box 2446, 3001 Leuven, Belgium and with Leuven.AI - KU Leuven institute for AI. e-mail: cemates.musluoglu, alexander.bertrand @esat.kuleuven.be}
\thanks{A conference precursor of this manuscript has been published in \cite{musluoglu2022fdasf}.}
}

\maketitle

\begin{abstract}
The distributed adaptive signal fusion (DASF) is an algorithmic framework that allows solving spatial filtering optimization problems in a distributed and adaptive fashion over a bandwidth-constrained wireless sensor network. The DASF algorithm requires each node to sequentially build a compressed version of the original network-wide problem and solve it locally. However, these local problems can still result in a high computational load at the nodes, especially when the required solver is iterative. In this paper, we study the particular case of fractional programs, i.e., problems for which the objective function is a fraction of two continuous functions, which indeed require such iterative solvers. By exploiting the structure of a commonly used method for solving fractional programs and interleaving it with the iterations of the standard DASF algorithm, we obtain a distributed algorithm with a significantly reduced computational cost compared to the straightforward application of DASF as a meta-algorithm. We prove convergence and optimality of this ``fractional DASF'' (F-DASF) algorithm and demonstrate its performance via numerical simulations.
\end{abstract}

\begin{IEEEkeywords}
Distributed Optimization, Distributed Signal Processing, Spatial Filtering, Signal Fusion, Wireless Sensor Networks, Fractional Programming.
\end{IEEEkeywords}

\section{Introduction}

\IEEEPARstart{W}{ireless} sensor networks (WSNs) consist of a set of sensor nodes that are distributed over an area to collect sensor signals at different locations. In a signal processing context, the aim of a WSN is often to find a filter to apply to these signals gathered at the various nodes such that the resulting filtered signals are optimal in some sense, or satisfy certain desired properties. In an adaptive context where signal statistics are varying in time, centralizing the data is very costly in terms of bandwidth and energy resources, as it requires continuously communicating a large number of samples of signals measured at the nodes. Therefore, various methods have been proposed, such as consensus, incremental strategies, diffusion or the alternating direction method of multipliers (ADMM) \cite{olfati2005consensus,lopes2007incremental,chen2012diffusion,boyd2011distributed} to find these optimal filters in a fully distributed fashion, i.e., where the different sensor nodes collaborate with each other to perform a certain inference task.

In this paper, we are interested in signal fusion and spatial filtering problems, commonly encountered in biomedical signal analysis \cite{bertrand2015distributed,blankertz2007optimizing,ramoser2000optimal}, wireless communication \cite{bjornson2020scalable, sanguinetti2019toward,nayebi2016performance}, or acoustics \cite{furnon2021distributed,zhang2018rate,benesty2008microphone}. More specifically, our goal is to solve such problems over WSNs, by finding an optimal, network-wide spatial filter $X$, to optimally fuse the channels of the network-wide multi-channel signal $\mathbf{y}(t)$ obtained from stacking the local signals measured at each node of the network. Mathematically, the optimal $X$ results from solving an optimization problem of which the objective function is written in the form $\varrho(X^T\mathbf{y}(t))$.

We focus in particular on so-called fractional programs, i.e., problems where the function $\varrho$ can be written as a fraction of two continuous functions, i.e., $\varrho(X^T\mathbf{y}(t))=\varphi_1(X^T\mathbf{y}(t))/\varphi_2(X^T\mathbf{y}(t))$. Several spatial filtering problems have such an objective, including signal-to-noise ratio (SNR) maximization filters \cite{van1988beamforming}, trace ratio optimization (TRO) \cite{wang2007trace}, regularized total least squares (RTLS) \cite{sima2004regularized,beck2006finding}, among others, and have found applications in the aforementioned fields. For example, in the context of electroencephalography (EEG) sensor networks, the motor imagery problem aims to find such a filter $X$ by solving the TRO problem to distinguish between imagined left versus right-hand movement using signals collected from various sensors placed on the scalp \cite{blankertz2007optimizing,wang2007trace}. Another example consists of solving the RTLS problem in an antenna or microphone array for direction-of-arrival estimation when both the observations and the source signals are noisy \cite{zhu2010sparse}.

In the existing distributed methods mentioned earlier, the objective $\varrho$ is assumed to be per-node separable, i.e., can be written as a sum $\varrho(X)=\sum_k \varrho_k(X)$ over the nodes $k$ of the network, where $\varrho_k$ depends solely on the data of node $k$. However, the objective function of the problems of interest, namely fractional spatial filtering problems, is not per-node separable and requires second-order statistics between all the sensor channel pairs of the nodes in the network. Artificially rewriting these problems as consensus-type ones to be able to apply ADMM or consensus is still possible but leads to a large increase in communication costs over the network that does not scale with the network size, even leading to nodes sharing more data than what they collect \cite{musluoglu2022unifiedp1,hovine2024nonsmooth}. This is because these algorithms iterate over a fixed batch of samples (each node transmitting the same batch multiple times), and have to start from scratch for each new batch of incoming samples. Therefore, we are interested in distributed methods that are directly designed to solve spatial filtering problems in an adaptive setting with streaming data to avoid having such issues.

For this purpose, an algorithmic framework called distributed adaptive signal fusion (DASF) has been proposed in \cite{musluoglu2022unifiedp1,musluoglu2022unifiedp2}. The main idea behind this method is to iteratively create a compressed version of the global problem at each node by transmitting only compressed data within the network, therefore greatly reducing the communication cost. These compressed problems are then solved locally at each node, and the process is repeated with a new batch of measured signals to be able to track changes in the signal statistics. A convenient property of the DASF ``meta'' algorithm is that the local problems to be solved at each node are compressed instances of the original (centralized) problem. The DASF algorithm therefore merely requires to ``copy-paste'' this solver at each individual node \cite{musluoglu2022dsfotoolbox}. 

However, solving the compressed problem can be computationally expensive when an expensive iterative solver is required. This is indeed also the case for fractional programs, which are commonly solved using an iterative algorithm referred to as Dinkelbach's procedure. In this paper, we propose the fractional DASF (F-DASF) algorithm which implements a single iteration of the Dinkelbach procedure within each iteration of the DASF algorithm, therefore interleaving the iterations of both algorithms and avoiding nested iterative loops. This however implies that F-DASF cannot rely on the convergence guarantees and proofs of the original DASF algorithm in \cite{musluoglu2022unifiedp1,musluoglu2022unifiedp2}, as these assume that the nodes fully execute Dinkelbach's procedure until convergence within each iteration of DASF. In contrast, the F-DASF algorithm only partially solves the compressed problem at each iteration by applying only a single step of the Dinkelbach procedure. As we will see, this will imply that various subresults in the convergence proof applicable to the DASF algorithm cannot be applied to the F-DASF algorithm. Results such as the fact that fixed points of the algorithm are stationary, and even the monotonic decrease of the objective have to be proven from scratch. Therefore, the main result of this paper will be to show that the simplified F-DASF algorithm still converges to the solution of the global problem, under similar conditions (such as well-posedness and constraint qualifications) as the original DASF algorithm. In our results, we will see that on top of significantly reducing the computational cost and simplifying the overall method, the F-DASF algorithm has comparable convergence rates to the DASF algorithm. 

The resulting F-DASF algorithm can in fact be viewed as a generalization of the distributed trace ratio (TRO) algorithm in \cite{musluoglu2021distributed,musluoglu2020dtro}, which is a particular fractional program where both $\varphi_1$ and $\varphi_2$ consist of only a single quadratic term. Our proposed F-DASF algorithm can be applied to the entire class of fractional programs. Furthermore, it extends the ``vanilla'' DASF framework in \cite{musluoglu2022unifiedp1,musluoglu2022unifiedp2} towards a more efficient class of algorithms for the particular case of fractional programs.

The outline of this paper is as follows. In Section \ref{sec:frac_prog}, we review the basics of fractional programs and Dinkelbach's procedure to solve them. In Section \ref{sec:prob_setting} we formally define the distributed problem setting and the assumptions used throughout this paper. The proposed F-DASF algorithm is provided in Section \ref{sec:fdasf} while its convergence proof is given in Section \ref{sec:convergence}. Finally, Section \ref{sec:simulations} shows the performance of the algorithm and its comparison with the state-of-the-art.

\textbf{Notation:} Uppercase letters are used to represent matrices and sets, the latter in calligraphic script, while scalars, scalar-valued functions, and vectors are represented by lowercase letters, the latter in bold. We use the notation $\chi_q^i$ to refer to a certain mathematical object $\chi$ (such as a matrix, set, etc.) at node $q$ and iteration $i$. The notation $\left(\chi^i\right)_{i\in\mathcal{I}}$ refers to a sequence of elements $\chi^i$ over every index $i$ in the ordered index set $\mathcal{I}$. If it is clear from the context, we omit the index set $\mathcal{I}$ and simply write $\left(\chi^i\right)_i$. A similar notation $\{\chi^i\}_{i\in\mathcal{I}}$ is used for non-ordered sets. We define an accumulation point of a sequence $(X^i)_{i\in\mathbb{N}}$ as the limit of a converging subsequence $(X^i)_{i\in\mathcal{I}}$ of $(X^i)_{i\in\mathbb{N}}$, with $\mathcal{I}\subseteq \mathbb{N}$. Additionally, $I_Q$ denotes the $Q\times Q$ identity matrix, $\mathbb{E}[\cdot]$ the expectation operator, $\text{tr}(\cdot)$ the trace operator, and $|\cdot|$ the cardinality of a set.

\section{Fractional Programming Review}\label{sec:frac_prog}
A fractional program is an optimization problem with an objective function $r$ represented by a ratio of two continuous and real-valued functions $f_1$ and $f_2$:
\begin{equation}\label{eq:prob_frac}
  \begin{aligned}
    \underset{X}{\text{minimize } } \quad & r(X)\triangleq\frac{f_1(X)}{f_2(X)}\\
    \textrm{subject to} \quad & X\in\mathcal{S},\\
    \end{aligned}
\end{equation}
where $\mathcal{S}\subset\mathbb{R}^{M\times Q}$ is a non-empty constraint set and $f_2(X)>0$ for $X\in\mathcal{S}$. We define the minimal value of $r$ over $\mathcal{S}$ as $\rho^*\triangleq \min_{X\in\mathcal{S}}r(X)$ and the set of arguments achieving this value as $\mathcal{X}^*\triangleq \{X\in\mathcal{S}\;|\;r(X)=\rho^*\}$. We denote by $X^*\in\mathcal{X}^*$ one particular solution of (\ref{eq:prob_frac}). To solve fractional programs, there exist two prominent methods based on solving auxiliary problems instead of the original problem (\ref{eq:prob_frac}), for which an overview can be found in \cite{schaible1983fractional}. The first method consists of creating an equivalent problem by defining new optimization variables using what is referred to as the Charnes-Cooper transform. It initially was presented in \cite{charnes1962programming} for linear $f_1$ and $f_2$ with $\mathcal{S}$ a convex polyhedron, and was then extended to cases where convexity assumptions were imposed on $f_1$ and $f_2$ \cite{schaible1974parameter,schaible1983fractional}. The second method is a parametric approach and is the one we will focus on in this paper. Initially presented in \cite{dinkelbach1967nonlinear}, it is often referred to as Dinkelbach's procedure. Let us define the auxiliary functions $f:\mathcal{S}\times \mathbb{R}\rightarrow\mathbb{R}$ and $g:\mathbb{R}\rightarrow\mathbb{R}$:
\begin{align}
  f(X,\rho)&\triangleq f_1(X)-\rho f_2(X),\label{eq:auxiliary}\\
  g(\rho)&\triangleq \min_{X\in\mathcal{S}} f(X,\rho), \label{eq:min_auxiliary}
\end{align}
where $\rho$ is a real-valued scalar. It can be shown that $g$ is continuous, strictly decreasing with $\rho$, and has a unique root. We also observe that for any fixed $X$, the function $f$ in (\ref{eq:auxiliary}) is affine, and therefore $g$ is concave. Note that these results do not require convexity assumptions on $f_1$ or $f_2$. It has been shown \cite{jagannathan1966some,dinkelbach1967nonlinear,schaible1976fractional,crouzeix1985algorithm,crouzeix1991algorithms} that there is a direct relationship between the roots of $g$ and the solution of (\ref{eq:prob_frac}), given in the following lemma.
\begin{lem}[see \cite{crouzeix1985algorithm}]\label{lem:frac_min}
  Suppose $\mathcal{S}$ is compact. Then, if for a scalar $\rho$ we have $g(\rho)=0$ then $\rho=\rho^*$, where $\rho^*$ is the global minimum of $r$.
\end{lem}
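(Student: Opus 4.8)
The goal is to show: if $g(\rho)=0$ for some scalar $\rho$, then $\rho=\rho^*$. My plan is to argue in two steps. First, I would establish that $g(\rho^*)=0$; second, I would invoke the fact (stated in the excerpt) that $g$ is strictly decreasing and hence has a \emph{unique} root, from which $\rho=\rho^*$ follows immediately. Compactness of $\mathcal{S}$ together with continuity of $f_1,f_2$ (and positivity of $f_2$) ensures the minimum defining $\rho^*$ is attained at some $X^*\in\mathcal{X}^*$, so the quantities involved are well-defined.

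For the first step, showing $g(\rho^*)=0$, I would prove the two inequalities $g(\rho^*)\le 0$ and $g(\rho^*)\ge 0$ separately. For $g(\rho^*)\le 0$: take the minimizer $X^*$ of $r$ over $\mathcal{S}$. Since $r(X^*)=\rho^*$ and $f_2(X^*)>0$, we have $f_1(X^*)-\rho^* f_2(X^*) = f_2(X^*)\bigl(r(X^*)-\rho^*\bigr) = 0$, hence $g(\rho^*) = \min_{X\in\mathcal S} f(X,\rho^*) \le f(X^*,\rho^*) = 0$. For $g(\rho^*)\ge 0$: for \emph{every} $X\in\mathcal{S}$ we have $r(X)\ge\rho^*$, so $f(X,\rho^*) = f_2(X)\bigl(r(X)-\rho^*\bigr)\ge 0$ because $f_2(X)>0$; taking the minimum over $X\in\mathcal{S}$ gives $g(\rho^*)\ge 0$. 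Combining, $g(\rho^*)=0$.

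For the second step, suppose $g(\rho)=0$ for some scalar $\rho$. We have just shown $g(\rho^*)=0$ as well, so $g$ vanishes at both $\rho$ and $\rho^*$. Since $g$ is strictly decreasing in $\rho$ (a property recalled in the text, following directly from $f_2>0$ on the compact set $\mathcal S$, which makes $\rho\mapsto f(X,\rho)$ strictly decreasing for each fixed $X$, and the minimum of strictly decreasing functions is strictly decreasing), it is injective, so its root is unique. Therefore $\rho=\rho^*$.

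I do not expect a serious obstacle here; the argument is elementary once the monotonicity and attainment facts are in place. The only point requiring mild care is the strict-decrease claim: one must verify that the pointwise minimum of a family of strictly decreasing affine functions of $\rho$ remains strictly decreasing, which uses that $\mathcal S$ is compact and $f_2$ is continuous and strictly positive (so $f_2$ is bounded below by a positive constant on $\mathcal S$, giving a uniform rate of decrease). If one prefers, this strict monotonicity can simply be cited from the text, in which case the proof reduces to the two-line computation that $g(\rho^*)=0$ plus the uniqueness of the root.
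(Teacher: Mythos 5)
Your argument is correct and is essentially the standard proof of this result from the cited literature (the paper itself does not reproduce a proof, deferring to \cite{crouzeix1985algorithm}): showing $g(\rho^*)=0$ via the two inequalities $f(X,\rho^*)=f_2(X)\bigl(r(X)-\rho^*\bigr)\ge 0$ on $\mathcal{S}$ and $f(X^*,\rho^*)=0$, then concluding by the strict monotonicity (hence root uniqueness) of $g$. Your care with the strict-decrease step, using that $f_2$ is continuous and positive on the compact set $\mathcal{S}$ so that the minimum is attained and the decrease is uniform, is exactly the point where compactness enters, so the proof is complete as written.
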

\noindent It is noted that it is possible to obtain weaker relationships between the roots of $g$ and (\ref{eq:prob_frac}) if $\mathcal{S}$ is not compact \cite{crouzeix1991algorithms,crouzeix1985algorithm}, but this is beyond the scope of this review, i.e., we will assume that $\mathcal{S}$ is compact in the remaining of this paper, unless mentioned otherwise.

Dinkelbach's procedure is an iterative method aiming to find the unique root $\rho^*$ of $g$. We start by initializing $X^0$ randomly and set $\rho^0=r(X^0)$. Then, at each iteration $i$, we find the solution set of the auxiliary problem minimizing $f$ given $\rho$:
\begin{equation}\label{eq:aux_prob}
  \begin{aligned}
    \underset{X}{\text{minimize } } \quad & f(X,\rho^i)=f_1(X)-\rho^if_2(X)\\
    \textrm{subject to} \quad & X\in\mathcal{S}.\\
    \end{aligned}
\end{equation}
The solution of (\ref{eq:aux_prob}) might not be unique but given by a set $\mathcal{X}^{i+1}$, in which case one $X^{i+1}\in\mathcal{X}^{i+1}$ is selected. Finally, the function values are updated as:
\begin{equation}
  \rho^{i+1}=r(X^{i+1}).
\end{equation}
The steps of Dinkelbach's procedure are summarized in Algorithm \ref{alg:dinkelbach}, while convergence results are summarized in the following theorem, which combines results from \cite{crouzeix1985algorithm,crouzeix1991algorithms,crouzeix1986note}.
\begin{thm}\label{thm:dinkelbach_convergence}
  Assuming $\mathcal{S}$ is compact, the sequence $(\rho^i)_i$ converges to the finite minimal value $\rho^*$ of (\ref{eq:prob_frac}). Additionally, convergent subsequences of $(X^i)_i$ converge to an optimal solution of (\ref{eq:prob_frac}). If Problem (\ref{eq:prob_frac}) has a unique solution $X^*$, then $(X^i)_i$ converges to $X^*$.
\end{thm}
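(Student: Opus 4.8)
The plan is to run the standard monotone-convergence argument for Dinkelbach's procedure, using only the properties of $g$ recalled above (continuity, strict monotonicity, a unique root, and $g(\rho)=0\Leftrightarrow\rho=\rho^*$ from Lemma~\ref{lem:frac_min}) together with the compactness of $\mathcal{S}$. \emph{Step 1 (monotonicity and boundedness of $(\rho^i)_i$).} Since $X^i\in\mathcal{S}$ for every $i$, the definition of $\rho^*$ gives $\rho^i=r(X^i)\ge\rho^*$, and $\rho^*$ is finite because the continuous $r=f_1/f_2$ attains its minimum on the compact set $\mathcal{S}$. For the monotone decrease, observe that $f(X^i,\rho^i)=f_2(X^i)\,(r(X^i)-\rho^i)=0$, so by optimality of $X^{i+1}$ in (\ref{eq:aux_prob}) we get $g(\rho^i)=f(X^{i+1},\rho^i)\le 0$; dividing by $f_2(X^{i+1})>0$ yields $\rho^{i+1}=r(X^{i+1})\le\rho^i$. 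Hence $(\rho^i)_i$ is non-increasing and bounded below, so it converges to some $\bar\rho\ge\rho^*$.

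\emph{Step 2 (identifying the limit).} I would rewrite $g(\rho^i)=f_1(X^{i+1})-\rho^i f_2(X^{i+1})$ and use $f_1(X^{i+1})=\rho^{i+1}f_2(X^{i+1})$ to obtain $\rho^i-\rho^{i+1}=-g(\rho^i)/f_2(X^{i+1})$. Compactness of $\mathcal{S}$ and continuity of $f_2$ provide a uniform bound $f_2(X)\le N$ on $\mathcal{S}$. Suppose, for contradiction, that $\bar\rho>\rho^*$. Then $\rho^i\ge\bar\rho>\rho^*$, so strict monotonicity of $g$ and $g(\rho^*)=0$ give $g(\rho^i)\le g(\bar\rho)=-\delta$ with $\delta>0$. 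Therefore $\rho^i-\rho^{i+1}\ge\delta/N>0$ for all $i$, which is incompatible with the convergence of $(\rho^i)_i$. Hence $\bar\rho=\rho^*$.

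\emph{Step 3 (behaviour of $(X^i)_i$).} Let $(X^{i_k})_k$ be any convergent subsequence with limit $\bar X$; since $\mathcal{S}$ is compact (hence closed), $\bar X\in\mathcal{S}$. Continuity of $r$ on $\mathcal{S}$ then gives $r(\bar X)=\lim_k r(X^{i_k})=\lim_k\rho^{i_k}=\rho^*$, so $\bar X\in\mathcal{X}^*$, i.e., $\bar X$ is an optimal solution of (\ref{eq:prob_frac}). If the solution $X^*$ is unique, then every convergent subsequence of $(X^i)_i$ has limit $X^*$; since the whole sequence lies in the compact set $\mathcal{S}$, a routine argument (a subsequence staying at positive distance from $X^*$ would itself admit a convergent sub-subsequence, whose limit would be an optimal solution different from $X^*$, a contradiction) shows $(X^i)_i\to X^*$.

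I expect the crux to be Step~2: converting the purely qualitative strict monotonicity of $g$ into a quantitative per-iteration decrease bounded below by $\delta/N$, which is precisely where compactness of $\mathcal{S}$ is essential (to keep $f_2$ bounded away from $+\infty$, and to guarantee that the minima defining $g$, $\rho^*$, and the auxiliary problem (\ref{eq:aux_prob}) are attained). The remaining steps are standard monotone-sequence and compactness arguments.
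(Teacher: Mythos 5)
Your proof is correct; note that the paper itself does not prove this theorem but defers to the cited fractional-programming literature, and your argument is precisely the standard one found there — in particular, the identity $\rho^i-\rho^{i+1}=-g(\rho^i)/f_2(X^{i+1})$ in your Step 2 is exactly the Newton-step relation the paper quotes after Algorithm 1. The only cosmetic point is that Lemma \ref{lem:frac_min} as stated gives only the implication $g(\rho)=0\Rightarrow\rho=\rho^*$; the converse $g(\rho^*)=0$ that you use follows either from the uniqueness of the root asserted in the surrounding text or directly from $g(\rho^*)\le f(X^*,\rho^*)=0$ together with the observation that $g(\rho^*)<0$ would produce a feasible $X$ with $r(X)<\rho^*$.
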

If the constraint set $\mathcal{S}$ is not compact, we can still obtain convergence of the sequence $(\rho^i)_i$ to $\rho^*$ if Problem (\ref{eq:prob_frac}) has a solution, the solution sets of the auxiliary problems (\ref{eq:aux_prob}) are not empty and $\text{sup}_i f_2(X^i)$ is finite \cite{crouzeix1985algorithm}, with $\text{sup}$ denoting the supremum.

\begin{figure}[!t]
  \removelatexerror
  \begin{algorithm}[H]
  \caption{Dinkelbach's procedure \cite{dinkelbach1967nonlinear}}\label{alg:dinkelbach}
  \DontPrintSemicolon
  \SetKwInOut{Input}{input}\SetKwInOut{Output}{output}
  \Output{$\rho^*$}
  \BlankLine
  $X^0$ initialized randomly in $\mathcal{S}$, $\rho^0\gets r(X^0),\; i\gets0$\;
  \Repeat
  {
  1) $X^{i+1}\gets \underset{X\in\mathcal{S}}{\text{argmin }}f(X,\rho^i)$\;
  2) $\rho^{i+1}\gets r(X^{i+1})$ where $r$ is defined in (\ref{eq:prob_frac})\;
  
  $i\gets i+1$\;
  }
  \end{algorithm}
\end{figure}

It is noted that Dinkelbach's procedure can be viewed as an instance of Newton's root-finding method, as we can show that \cite{crouzeix1991algorithms}:
\begin{equation}
  \rho^{i+1}=\rho^{i}-\frac{g(\rho^{i})}{-f_2(X^{i+1})},
\end{equation}
which corresponds to Newton's root finding method applied to the function $g$, since $-f_2(X^{i+1})$ is a subgradient of $g$ at $\rho^i$. Other algorithms using a different root finding method have also been described to solve fractional programs, for example the bisection method \cite{schaible1983fractional,beck2006finding}. It is however indicated in \cite{crouzeix2008revisiting} that the Newton root finding procedure has faster convergence compared to those in the context of fractional programming.

\section{Problem Setting and Preliminaries}\label{sec:prob_setting}

We consider a WSN represented by a graph $\mathcal{G}$ consisting of $K$ nodes where the set of nodes is denoted as $\mathcal{K}=\{1,\dots,K\}$. Each node $k$ measures a stochastic $M_k-$channel signal $\mathbf{y}_k$ considered to be (short-term) stationary and ergodic. The observation of $\mathbf{y}_k$ at time $t$ is denoted as $\mathbf{y}_k(t)\in\mathbb{R}^{M_k}$. In practice, we will often omit this time index $t$ for notational convenience unless we explicitly want to emphasize the time-dependence of $\mathbf{y}$. We define the network-wide signal to be
\begin{equation}\label{eq:partition_y}
  \mathbf{y}=[\mathbf{y}_1^T,\dots,\mathbf{y}_K^T]^T,
\end{equation}
such that the time sample $\mathbf{y}(t)$ at time $t$ is the $M-$dimensional vector obtained by stacking every $\mathbf{y}_k(t)$, where $M=\sum_{k\in\mathcal{K}}M_k$. In our algorithm development, we do not assume the statistics of $\mathbf{y}$ to be known, which implies that the algorithm has to estimate or learn these on the fly based on incoming sensor data at the different nodes.

\subsection{Fractional Problems for Signal Fusion in WSNs}

In spatial filtering and signal fusion, we aim to combine the channels of the network-wide signal $\mathbf{y}$ using a linear filter $X$ such that the filter $X$  optimizes a particular objective function. In this work, we are interested in spatial filters that are the solution of some fractional program. Two well-known examples, namely trace ratio optimization (TRO) and regularized total least squares (RTLS), are shown in Table \ref{tab:ex_prob} for illustration purposes.

A generic instance of such fractional programs for spatial filtering and signal fusion can be written as
\begin{equation}\label{eq:prob_g}
  \begin{aligned}
  \underset{X\in\mathbb{R}^{M\times Q}}{\text{minimize } } \quad & \varrho(X^T\mathbf{y}(t),X^TB)\triangleq\frac{\varphi_1\left(X^T\mathbf{y}(t),X^TB\right)}{\varphi_2\left(X^T\mathbf{y}(t),X^TB\right)}\\
  \textrm{subject to} \quad & \eta_j\left(X^T\mathbf{y}(t),X^TB\right)\leq 0,\;\textrm{ $\forall j\in\mathcal{J}_I$,}\\
    & \eta_j\left(X^T\mathbf{y}(t),X^TB\right)=0,\;\textrm{ $\forall j\in\mathcal{J}_E$.}
  \end{aligned}
\end{equation}
The functions $\eta_j$ represent inequality constraints for $j\in\mathcal{J}_I$ and equality constraints for $j\in\mathcal{J}_E$, while the full index set of constraints is denoted as $\mathcal{J}=\mathcal{J}_I\cup\mathcal{J}_E$, such that we have $J=|\mathcal{J}|$ constraints in total. Throughout this paper, we assume that the constraint set $\mathcal{S}$ of (\ref{eq:prob_g}) is compact, as is commonly assumed for fractional programs to guarantee convergence of the Dinkelbach procedure. We also assume that the denominator $\varphi_2$ is always strictly positive over $\mathcal{S}$ (or strictly negative, in which case we can minimize $-\varphi_1/-\varphi_2$). The matrix $B$ is an $M\times L$ deterministic matrix independent of time. We will see that $\mathbf{y}$ and $B$ are treated similarly in the proposed method, however, we make the distinction between the two to emphasize the adaptive properties of the proposed algorithm. In practice, the matrices $B$ are often required to enforce some constraints and their number of columns $L$ is application dependent. For example, in the TRO problem of Table \ref{tab:ex_prob}, we have $B=I_M$ in the constraints. In a distributed context, we partition $B$ similar to (\ref{eq:partition_y}):
\begin{equation}\label{eq:partition_B}
  B=[B_1^T,\dots,B_K^T]^T,
\end{equation}
where $B_k$ is assumed to be known by node $k$. Note that the optimization variable $X$ in (\ref{eq:prob_g}) only appears in the fusion form $X^T\mathbf{y}$ or $X^TB$, which is a requirement for the DASF framework. Furthermore, every other parameter of the problem that is not fused with $X$ is assumed to be known by every node (such as the signal $d$ in the RTLS example in Table \ref{tab:ex_prob}). 

The optimization problems written in the form (\ref{eq:prob_g}) represent a subclass of the problems considered in the DASF framework \cite{musluoglu2022unifiedp1}, namely problems for which the objective is a fractional function. Our goal is to derive a new algorithm for these cases, which is computationally more attractive than straightforwardly applying the generic DASF ``meta'' algorithm to (\ref{eq:prob_g}).

Similar to the DASF framework in \cite{musluoglu2022unifiedp1}, we also allow Problem (\ref{eq:prob_g}) to contain multiple signals $\mathbf{y}$ or matrices $B$, which have been omitted from (\ref{eq:prob_g}) for notational conciseness. For example, the RTLS problem in Table \ref{tab:ex_prob} requires two different $B-$matrices: $B^{(1)}=I_M$ in the denominator of the objective function since $||\mathbf{x}||^2 =(\mathbf{x}^T\cdot I_M)\cdot(\mathbf{x}^T\cdot I_M)^T=(\mathbf{x}^T\cdot B^{(1)})\cdot (\mathbf{x}^T\cdot B^{(1)})^T$, and $B^{(2)}=A$ in the constraint function. For the sake of intelligibility, we will continue the derivation and algorithm analysis for a single set of $\mathbf{y}$, and $B$, yet we emphasize that this is not a hard restriction as the algorithm can be easily generalized to multiple signal variables and multiple deterministic matrices (we refer to the original DASF paper \cite{musluoglu2022unifiedp1} for more details on these cases).

Note that in a centralized context, Problem (\ref{eq:prob_g}) can be solved using the Dinkelbach procedure, where each auxiliary problem can be written as
\begin{equation}\label{eq:aux_prob_g}
  \begin{aligned}
    \underset{X\in\mathbb{R}^{M\times Q}}{\text{minimize } } \quad & \varphi_1\left(X^T\mathbf{y}(t),X^TB\right)-\rho^i\;\varphi_2\left(X^T\mathbf{y}(t),X^TB\right)\\
    \textrm{subject to} \quad & \eta_j\left(X^T\mathbf{y}(t),X^TB\right)\leq 0,\;\textrm{ $\forall j\in\mathcal{J}_I$,}\\
      & \eta_j\left(X^T\mathbf{y}(t),X^TB\right)=0,\;\textrm{ $\forall j\in\mathcal{J}_E$,}
    \end{aligned}
\end{equation}
with
\begin{equation}
  \rho^i=\frac{\varphi_1\left(X^{iT}\mathbf{y}(t),X^{iT}B\right)}{\varphi_2\left(X^{iT}\mathbf{y}(t),X^{iT}B\right)}.
\end{equation}
Since the minimization of Problem (\ref{eq:prob_g}) is done over $X$ only, we define the following functions in order to compactify some of the equations throughout this paper:
\begin{align}\label{eq:r_f_h}
  r(X)&\triangleq \varrho(X^T\mathbf{y}(t),X^TB),\nonumber\\
  f_j(X)&\triangleq \varphi_j(X^T\mathbf{y}(t),X^TB),\; j\in\{1,2\},\\
  h_j(X)&\triangleq \eta_j(X^T\mathbf{y}(t),X^TB),\; \forall j\in\mathcal{J} \nonumber.
\end{align}
These definitions allow us to write Problem (\ref{eq:prob_g}) as in (\ref{eq:prob_frac}) and the auxiliary problems (\ref{eq:aux_prob_g}) as in (\ref{eq:aux_prob}), where $\mathcal{S}$ denotes the constraint set defined by the constraint functions $h_j$. In the remaining parts of this text, we will mention interchangeably (\ref{eq:prob_frac}) and (\ref{eq:prob_g}) to refer to the problem we aim to solve while interchangeably using (\ref{eq:aux_prob}) and (\ref{eq:aux_prob_g}) for its auxiliary problems, depending on whether we want to emphasize the specific $X^T\mathbf{y}(t)$, or $X^TB$, structure or not. We define $X^*$ to be a solution of Problem (\ref{eq:prob_g}) and $\mathcal{X}^*$ to be the full solution set.

\begin{table}
  \renewcommand{\arraystretch}{2}
  \caption{Examples of problems with fractional objectives as in (\ref{eq:prob_g}).}
  \label{tab:ex_prob}
  \begin{tablenotes}
    \item \textit{In TRO, $\mathbf{v}$ is a second stochastic signal collected by the nodes in addition to $\mathbf{y}$. In RTLS, $d$ is a target signal that is assumed to be known, and $A$ is a deterministic matrix used for Tikhonov regularization.}
  \end{tablenotes}
  \begin{tabularx}{\columnwidth}{|s|b|z|}
  \hline
   Problem & Cost function to minimize & Constraints \\ \hhline{|=|=|=|}
   TRO \cite{wang2007trace} & $-\frac{\mathbb{E}[||X^T\mathbf{v}(t)||^2]}{\mathbb{E}[||X^T\mathbf{y}(t)||^2]}$ & $X^TX=I_Q$ {\scriptsize $\Leftrightarrow (X^TB)(X^TB)^T=I_Q$ with $B=I_M$} \\ \hline
   RTLS \cite{sima2004regularized,beck2006finding} & $\frac{\mathbb{E}[||\mathbf{x}^T\mathbf{y}(t)-d(t)||^2]}{1+||\mathbf{x}||^2}$ & $||\mathbf{x}^TA||^2\leq \delta^2$ \\ \hline
  \end{tabularx}
\end{table}

\subsection{General Assumptions}\label{sec:assumptions}

To be able to state convergence guarantees of the proposed algorithm, we require some assumptions on Problem (\ref{eq:prob_g}). These assumptions are similar to those for the DASF algorithm proposed in \cite{musluoglu2022unifiedp1,musluoglu2022unifiedp2}, the main difference being that some assumptions are required to hold for the auxiliary problems as well, in order for our proposed method to work. In practice, if Problem (\ref{eq:prob_g}) satisfies these assumptions, the corresponding auxiliary problems typically do so as well.

\begin{asmp}\label{asmp:well_posed}
  The targeted instance of Problem (\ref{eq:prob_g}) and its corresponding auxiliary problems (\ref{eq:aux_prob_g}) are well-posed\footnote{The notion of (generalized Hadamard) well-posedness we require is based on \cite{hadamard1902problemes,zhou2005hadamard}. The main difference is that we require the map from the parameter (inputs of the problem) space to the solution space to be continuous instead of upper semicontinuous, which is required for the convergence proof.}, in the sense that the solution set is not empty and varies continuously with a change in the parameters of the problem.
\end{asmp}
\noindent This first assumption translates to requiring that an infinitesimal change in the input of the problem, for example in the signal statistics of $\mathbf{y}$, results in an infinitesimal change in its solutions $X^*$. As an example, consider the ``full-rank'' versus ``rank-deficient'' least squares problems, for which Assumption \ref{asmp:well_posed} is satisfied for the former but not the latter.

\begin{asmp}\label{asmp:kkt}
  The solutions of Problem (\ref{eq:prob_g}) and its corresponding auxiliary problems (\ref{eq:aux_prob_g}) satisfy the linear independence constraint qualifications (LICQ), i.e., $X^*$ satisfies the Karush-Kuhn-Tucker (KKT) conditions.
\end{asmp}
\noindent Assumption \ref{asmp:kkt} is a common assumption in the field of optimization \cite{peterson1973review}. The LICQ enforces that, if $X^*$ is a solution of Problem (\ref{eq:prob_g}), then the gradients $\nabla_X h_j(X^*)$, $j\in\mathcal{J}^*$, are linearly independent\footnote{A set of matrices $\{A_j\}_{j\in\mathcal{J}}$ is linearly independent when $\sum_{j\in\mathcal{J}}\alpha_jA_j=0$ is satisfied if and only if $\alpha_j=0$, $\forall j\in\mathcal{J}$, or equivalently, when $\{\text{vec}(A_j)\}_{j\in\mathcal{J}}$ is a set of linearly independent vectors, where $\text{vec}(\cdot)$ is the vectorization operator.}, where $\mathcal{J}^*\subseteq \mathcal{J}$ is the set of all indices $j$ for which $h_j(X^*)=0$. This is a requirement for the KKT conditions to be necessary for a solution $X^*$. Further details on constraint qualifications can be found in \cite{peterson1973review}. If there is no constraint function $\eta_j$ in Problem (\ref{eq:prob_g}), Assumption \ref{asmp:kkt} simply implies that $\nabla_X f(X^*)=0$.

An additional assumption requiring the compactness of the sublevel sets of the objective of (\ref{eq:aux_prob_g}) is required in the original DASF algorithm \cite{musluoglu2022unifiedp1,musluoglu2022unifiedp2}, its purpose being to ensure that the points generated by the algorithm lie in a compact set. As we will show in Section \ref{sec:convergence}, this assumption can be omitted in the case of the proposed F-DASF algorithm due to the fact that we only consider fractional programs with compact constraint sets. Note that the latter assumption should not be viewed as a limiting condition, as we will empirically demonstrate in Section \ref{sec:qol} through a simulation that the F-DASF algorithm introduced in Section \ref{sec:fdasf} can still converge to the correct solution for non-compact constraint sets in problems for which the centralized Dinkelbach procedure also converges, yet without a theoretical guarantee.

\subsection{Adaptivity and Approximations of Statistics}\label{sec:adaptivity}

As mentioned previously, the signal $\mathbf{y}$ is stochastic, therefore the functions $\varphi_1,\varphi_2,\eta_j$ implicitly contain an operator to transform probabilistic quantities into deterministic values, such that the problem solved in practice is deterministic. The most common operator encountered in signal processing applications is the expectation, i.e., there exist deterministic functions $\Phi_1,\Phi_2,H_j$ such that
\begin{align}
  \varphi_j(X^T\mathbf{y}(t),X^TB)&=\mathbb{E}[\Phi_j(X^T\mathbf{y}(t),X^TB)],\;j\in\{1,2\},\nonumber\\
  \eta_j(X^T\mathbf{y}(t),X^TB)&=\mathbb{E}[H_j(X^T\mathbf{y}(t),X^TB)],\;\forall j\in\mathcal{J}.
\end{align}
A common way to approximate these functions in a practical implementation is to take $N$ time samples of the signal $\mathbf{y}$, say $\{\mathbf{y}(\tau)\}_{\tau=t}^{t+N-1}$, and approximate the expectation through a sample average, for example:
\begin{equation}\label{eq:approx_E}
  \mathbb{E}[\Phi_j(X^T\mathbf{y}(t),X^TB)]\approx \frac{1}{N}\sum_{\tau=t}^{t+N-1}\Phi_j(X^T\mathbf{y}(\tau),X^TB).
\end{equation}
Under the ergodicity assumption on the signal $\mathbf{y}$, (\ref{eq:approx_E}) gives an accurate approximation for sufficiently large $N$. In practical realizations, the objective functions and constraints in (\ref{eq:prob_g}) will be evaluated on such sample batches of size $N$. 

Furthermore, the stationarity assumption imposed on $\mathbf{y}$ is merely added for mathematical tractability, as it allows us to remove the time-dependence, as it is often done in the adaptive signal processing literature to analyze asymptotic convergence \cite{diniz2019adaptive,sayed2011adaptive}. However, in practice, we do not require the signals to be stationary, but rather short-term stationary and/or with slowly varying statistics. In Section \ref{sec:simulations}, we will demonstrate that the proposed method is indeed able to track changes in the statistical properties of the signal. For ease of notation, we will use the exact functions $\varphi_j$ and $\eta_j$ (or $f_j$ and $h_j$) in the remaining parts of this text, while in practice and in the simulations presented in Section \ref{sec:simulations} they will be replaced by an approximation such as (\ref{eq:approx_E}).

\section{Cost-Efficient DASF for Fractional Programs}\label{sec:fdasf}

In this section, we propose a new algorithm for solving (\ref{eq:prob_g}) in a distributed and adaptive fashion, where the required statistics of $\mathbf{y}$ are estimated and learned on the fly based on the incoming sensor observations. As mentioned earlier, the DASF algorithm in \cite{musluoglu2022unifiedp1} can be straightforwardly applied to (\ref{eq:prob_g}) to obtain such a distributed algorithm, where the nodes would use a solver for the original (centralized) problem to solve locally compressed instances of the network-wide problem. However, this would require a node to fully execute Dinkelbach's procedure (Algorithm \ref{alg:dinkelbach}) within each iteration of the DASF algorithm. Since Algorithm \ref{alg:dinkelbach} is itself iterative, we would obtain nested iterations of Algorithm \ref{alg:dinkelbach} within the DASF algorithm's iterations, which would result in a large computational cost at each node. Instead, we propose the fractional DASF (F-DASF) algorithm interleaving the steps of the DASF algorithm with the iterations of Dinkelbach's procedure (Algorithm \ref{alg:dinkelbach}), where each DASF iteration only requires to perform \textit{a single} iteration of Algorithm \ref{alg:dinkelbach}, thereby greatly reducing the computational burden.

\subsection{Dynamic Network Pruning}\label{sec:pruning}

At each iteration $i$, an updating node $q\in\mathcal{K}$ is selected to be the so-called updating node. As will be described in the next subsections, the other nodes $k\neq q$ of the network will forward and fuse compressed data towards the updating node $q$ which will be responsible for performing the update on the variable $X$ at the current iteration. The selection of the updating node for different iterations will be done in a round-robin fashion. Based on the selected node, the network is pruned so as to obtain a spanning tree $\mathcal{T}^i(\mathcal{G},q)$, such that there is a unique path connecting each pair of nodes. The pruning's main purpose is to define a spanning tree with a root in the updating node $q$, allowing to aggregate the data at this node. For networks with arbitrary topologies, feedback loops leading to unwanted effects in the data fusion process would be present if no pruning were implemented. The pruning function $\mathcal{T}^i$ can be chosen freely, however, it should avoid cutting any link between the updating node $q$ and its neighbors $n\in\mathcal{N}_q$ \cite{musluoglu2022unifiedp2}, where we denote as $\mathcal{N}_q$ the set of nodes neighboring node $q$. A visual example of a pruning $\mathcal{T}^i$ for a given graph $\mathcal{G}$ is provided in Figure \ref{fig:pruning}. In practice, the pruning can be implemented in a distributed fashion within the network, e.g., as in Algorithm $4$ of \cite{hovine2022maxvar}. We further note that the dependence of $\mathcal{T}^i$ on the index $i$ is for flexibility purposes, e.g., if the network connectivity graph changes from iteration to iteration. In the remaining parts of this section, the set $\mathcal{N}_k$ refers to the neighbors of node $k$ in the pruned network, i.e., with respect to the graph $\mathcal{T}^i(\mathcal{G},q)$.

\subsection{Data Flow}

This subsection describes the data flow within the F-DASF algorithm, which is similar to the one of the DASF framework in \cite{musluoglu2022unifiedp1}, yet is included here for self-containedness.

\begin{figure}[t]
  \includegraphics[width=0.48\textwidth]{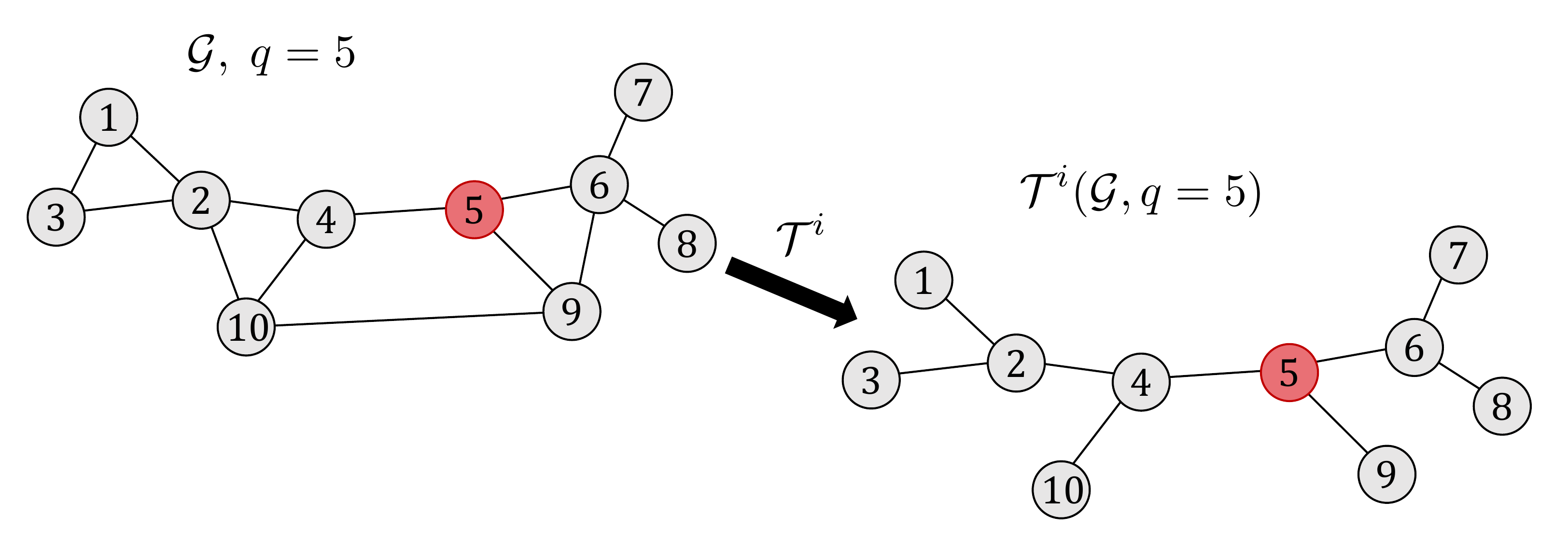}
  \caption{Example of a pruning when the updating node is $q=5$.}
  \label{fig:pruning}
\end{figure}

Let us define $X^i$ to be the estimation of the global filter $X$ at iteration $i$ and partitioned as
\begin{equation}\label{eq:X_part}
  X^i=[X_1^{iT},\dots,X_K^{iT}]^T,
\end{equation}
where $X_k$ is a $M_k\times Q$ matrix such that $X^{iT}\mathbf{y}=\sum_{k}X_k^{iT}\mathbf{y}_k$ and $X^{iT}B=\sum_kX_k^{iT}B_k$. At the beginning of each iteration $i$, every node $k\in\mathcal{K}\backslash\{q\}$ uses its current estimate $X_k^i$ to compress its $M_k-$channel sensor signal $\mathbf{y}_k$ into a $Q-$channel signal, assuming $Q\leq M_k$. A similar operation is done on $B_k$ to obtain
\begin{equation}\label{eq:y_B_compress}
  \widehat{\mathbf{y}}_k^i\triangleq X_k^{iT}\mathbf{y}_k,\;\widehat{B}_k^i\triangleq X_k^{iT}B_k.
\end{equation}
Note that given $X^i$, the objective and constraints of Problems (\ref{eq:prob_g}) and (\ref{eq:aux_prob_g}) can be evaluated at $X=X^i$ at a certain node if that node has access to the sums
\begin{align}\label{eq:X_i_y}
  X^{iT}\mathbf{y}=\sum_{k\in\mathcal{K}}\widehat{\mathbf{y}}_k^i,\;X^{iT}B=\sum_{k\in\mathcal{K}}\widehat{B}_k^i.
\end{align}
The idea is therefore to reconstruct these sums at the updating node $q$ by fusing and forwarding the signals $\widehat{\mathbf{y}}_k^i$ and matrices $\widehat{B}_k^i$ towards node $q$. Note that for the signals $\widehat{\mathbf{y}}_k^i$, this means that each node $k$ will need to transmit $N$ samples\footnote{If $Q>M_k$, node $k$ should simply transmit $N$ samples of its raw signal $\mathbf{y}_k$ to one of its neighbors, say $n\in\mathcal{N}_k$, who treats $\mathbf{y}_k$ as part of its own sensor signal. In this way, node $k$ does not actively participate in the algorithm but acts as an additional set of $M_k$ sensors of node $n$.}, with $N$ large enough to be able to accurately estimate the statistics required to evaluate the objective and constraints of (\ref{eq:prob_g}) and (\ref{eq:aux_prob_g}), implying an $\mathcal{O}(NQ)$ per-node communication cost (assuming $N$ is much larger than the number of columns of $B$).

The way the data is fused and forwarded towards node $q$ is as follows. Each node $k\neq q$ waits until it receives data from all its neighboring nodes $l$ except one, say node $n$. Upon receiving this data, node $k$ sums all the data received from its neighbors to its own compressed data (\ref{eq:y_B_compress}) and transmits this to node $n$. The data transmitted to node $n$ from node $k$ is therefore $N$ samples of the signal
\begin{equation}\label{eq:sum_fwd}
  \widehat{\mathbf{y}}_{k \rightarrow n}^i= X_k^{iT}\mathbf{y}_k+\sum_{l\in\mathcal{N}_k\backslash\{n\}}\widehat{\mathbf{y}}_{l\rightarrow k}^i.
\end{equation}
Note that the second term of (\ref{eq:sum_fwd}) is recursive, and vanishes for leaf nodes, i.e., nodes with a single neighbor. Therefore, the data transmission is initiated at leaf nodes of the pruned network $\mathcal{T}^{i}(\mathcal{G},q)$. The data eventually arrives at the updating node $q$, which receives $N$ samples of the signal
\begin{equation}\label{eq:sum_fwd_n}
  \widehat{\mathbf{y}}_{n\rightarrow q}^i=X_n^{iT}\mathbf{y}_n+\sum_{k\in\mathcal{N}_n\backslash\{q\}}\widehat{\mathbf{y}}_{k\rightarrow n}^i=\sum_{k\in\mathcal{B}_{nq}}\widehat{\mathbf{y}}_k^i
\end{equation}
from all its neighbors $n\in\mathcal{N}_q$. The set $\mathcal{B}_{nq}$ in (\ref{eq:sum_fwd_n}) contains the nodes in the subgraph that includes node $n$ after cutting the edge between nodes $n$ and $q$ in $\mathcal{T}^i(\mathcal{G},q)$. An example of this data flow towards node $q$ is provided in Figure \ref{fig:tree_diagram}. A similar procedure is applied for the deterministic matrix $B$, such that node $q$ receives
\begin{equation}\label{eq:sum_fwd_n_B}
  \widehat{B}_{n\rightarrow q}^i=X_n^{iT}B_n+\sum_{k\in\mathcal{N}_n\backslash\{q\}}\widehat{B}_{k\rightarrow n}^i=\sum_{k\in\mathcal{B}_{nq}}\widehat{B}_k^i,
\end{equation}
from $n\in\mathcal{N}_q$.

\begin{figure}[t]
  \includegraphics[width=0.48\textwidth]{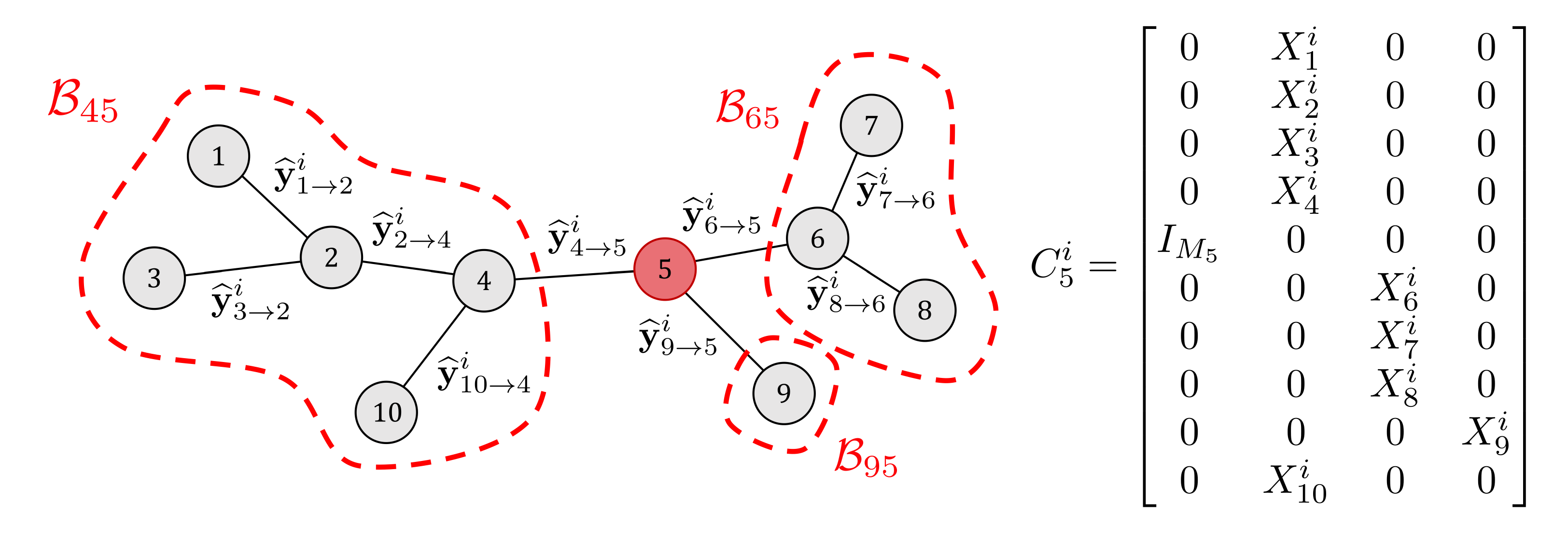}
  \caption{(Adopted from \cite{musluoglu2021distributed}) Example of a tree network where the updating node is node $5$. The clusters containing the nodes ``hidden'' from node $5$ are shown here as $\mathcal{B}_{45}$, $\mathcal{B}_{65}$, $\mathcal{B}_{95}$. The resulting transition matrix is given by $C_5^i$.}
  \label{fig:tree_diagram}
\end{figure}

Let us now label the neighboring nodes of node $q$ as $\mathcal{N}_q\triangleq\{n_1,\dots,n_{|\mathcal{N}_q|}\}$. Node $q$'s own observation $\mathbf{y}_q$ and the compressed signals obtained from its neighbors can then be concatenated as
\begin{equation}\label{eq:tree_data}
    \widetilde{\mathbf{y}}_q^i\triangleq[\mathbf{y}_q^T,\widehat{\mathbf{y}}_{n_1\rightarrow q}^{iT},\dots,\widehat{\mathbf{y}}_{n_{|\mathcal{N}_q|}\rightarrow q}^{iT}]^T\;\in\mathbb{R}^{\widetilde{M}_q},
\end{equation}
where $\widetilde{M}_q=|\mathcal{N}_q|\cdot Q+M_q$, and similarly for the deterministic term $B$:
\begin{equation}\label{eq:B_tilde}
  \widetilde{B}_q^i\triangleq [B_q^T,\widehat{B}_{n_1\rightarrow q}^{iT},\dots,\widehat{B}_{n_{|\mathcal{N}_q|}\rightarrow q}^{iT}]^T\;\in\mathbb{R}^{\widetilde{M}_q\times L}.
\end{equation}
The quantities $\widetilde{\mathbf{y}}_q^i$ and $\widetilde{B}_q^i$ represent the data available at the updating node $q$ during iteration $i$. In the original DASF algorithm \cite{musluoglu2022unifiedp1}, these local data are then used to construct a local (compressed) version of the original global problem at node $q$. Defining a new variable $\widetilde{X}_q\in\mathbb{R}^{\widetilde{M}_q\times Q}$ at node $q$ intending to mimic the global variable $X$ locally, node $q$ creates a local version of the global problem (\ref{eq:prob_g}), given by
\begin{equation}\label{eq:loc_prob}
  \begin{aligned}
    \underset{\widetilde{X}_q\in\mathbb{R}^{\widetilde{M}_q\times Q}}{\text{minimize } } & \frac{\varphi_1\left(\widetilde{X}_q^T\widetilde{\mathbf{y}}_q^i(t),\widetilde{X}_q^T\widetilde{B}_q^i\right)}{\varphi_2\left(\widetilde{X}_q^T\widetilde{\mathbf{y}}_q^i(t),\widetilde{X}_q^T\widetilde{B}_q^i\right)}\\
  \textrm{subject to } & \eta_j\left(\widetilde{X}_q^T\widetilde{\mathbf{y}}_q^i(t),\widetilde{X}_q^T\widetilde{B}_q^i\right)\leq 0\;\textrm{ $\forall j\in\mathcal{J}_I$},\\
    & \eta_j\left(\widetilde{X}_q^T\widetilde{\mathbf{y}}_q^i(t),\widetilde{X}_q^T\widetilde{B}_q^i\right)=0\;\textrm{ $\forall j\in\mathcal{J}_E$}.
  \end{aligned}
\end{equation}
At this point, the updating node $q$ can follow the steps of the DASF algorithm \cite{musluoglu2022unifiedp1} and locally solve Problem (\ref{eq:loc_prob}) to obtain the optimal local value $\widetilde{X}_q^*$, after which a new DASF iteration starts with a new updating node. Note that since (\ref{eq:loc_prob}) is a fractional program\footnote{This is a general property of the DASF framework: the local problems to be solved at each node exhibit the same structure as the centralized problem and can therefore use the same solver.}, node $q$ uses the Dinkelbach procedure provided in Algorithm \ref{alg:dinkelbach} to solve it. However, since the Dinkelbach procedure is itself iterative, using it to solve (\ref{eq:loc_prob}) is computationally expensive, creating nested iterations inside the outer-loop iterations of the DASF algorithm. Therefore, for generic fractional programs solved using Algorithm \ref{alg:dinkelbach}, the DASF algorithm operates at two different time scales. In the next subsection, we propose a modification to the DASF algorithm that avoids this problem by only solving a \textit{single} step of the Dinkelbach procedure at each DASF iteration, greatly reducing the computational burden at each node.

\subsection{The Fractional DASF (F-DASF) Algorithm}

Instead of solving a compressed version of the global problem as in (\ref{eq:loc_prob}) at each updating node $q$, the proposed F-DASF algorithm is focused on solving the following compressed version of the auxiliary problem (\ref{eq:loc_aux_prob}):
\begin{equation}\label{eq:loc_aux_prob}
  \begin{aligned}
    \underset{\widetilde{X}_q\in\mathbb{R}^{\widetilde{M}_q\times Q}}{\text{minimize } } & \varphi_1\left(\widetilde{X}_q^T\widetilde{\mathbf{y}}_q^i(t),\widetilde{X}_q^T\widetilde{B}_q^i\right)-\rho^i\varphi_2\left(\widetilde{X}_q^T\widetilde{\mathbf{y}}_q^i(t),\widetilde{X}_q^T\widetilde{B}_q^i\right)\\
  \textrm{subject to } & \eta_j\left(\widetilde{X}_q^T\widetilde{\mathbf{y}}_q^i(t),\widetilde{X}_q^T\widetilde{B}_q^i\right)\leq 0\;\textrm{ $\forall j\in\mathcal{J}_I$},\\
    & \eta_j\left(\widetilde{X}_q^T\widetilde{\mathbf{y}}_q^i(t),\widetilde{X}_q^T\widetilde{B}_q^i\right)=0\;\textrm{ $\forall j\in\mathcal{J}_E$},
  \end{aligned}
\end{equation}
where the term $\rho^i$ appearing in the objective function corresponds to the current value of $r$ for $X^i$, i.e., $\rho^i=r(X^i)$ and is computed at node $q$ as follows. Let us define $\widetilde{X}_q^i$ as
\begin{equation}\label{eq:X_fixed}
  \widetilde{X}_q^i\triangleq[X_q^{iT},I_Q,\dots,I_Q]^T.
\end{equation}
From the definitions provided in (\ref{eq:X_i_y})-(\ref{eq:B_tilde}), we see that $\widetilde{X}_q^i$ is the local equivalent of $X^i$, such that
\begin{equation}\label{eq:filtered_data}
  X^{iT}\mathbf{y}=\widetilde{X}_q^{iT}\widetilde{\mathbf{y}}_q^i,\;X^{iT}B=\widetilde{X}_q^{iT}\widetilde{B}_q^i.
\end{equation}
Since node $q$ has access to $X_q^i$, $\widetilde{\mathbf{y}}_q^i$ and $\widetilde{B}_q^i$, the above equations imply that:
\begin{equation}\label{eq:compute_rho}
  \rho^i=\varrho\left(\widetilde{X}_q^{iT}\widetilde{\mathbf{y}}_q^i(t),\widetilde{X}_q^{iT}\widetilde{B}_q^i\right)=\frac{\varphi_1\big(\widetilde{X}_q^{iT}\widetilde{\mathbf{y}}_q^i(t),\widetilde{X}_q^{iT}\widetilde{B}_q^i\big)}{\varphi_2\big(\widetilde{X}_q^{iT}\widetilde{\mathbf{y}}_q^i(t),\widetilde{X}_q^{iT}\widetilde{B}_q^i\big)}.
\end{equation}

Note that solving (\ref{eq:loc_aux_prob}) corresponds to executing a single iteration of Dinkelbach's procedure in Algorithm \ref{alg:dinkelbach}. In the case of the standard DASF algorithm, solving the corresponding local problem (\ref{eq:loc_prob}) would require solving multiple instances of (\ref{eq:loc_aux_prob}), each time with a different value for $\rho$, in order to execute all iterations in Algorithm \ref{alg:dinkelbach} until convergence.

\begin{figure}[!t]
  \removelatexerror
  \DontPrintSemicolon
  \begin{algorithm}[H]
  \caption{F-DASF Algorithm}\label{alg:f_dasf}
  \SetKwInOut{Output}{output}
  \BlankLine
  $X^0$ initialized randomly, $i\gets0$.\;
  \Repeat
  {
  Choose the updating node as $q\gets (i\mod K)+1$.\;
  1) The network $\mathcal{G}$ is pruned into a tree $\mathcal{T}^i(\mathcal{G},q)$.\;

  2) Every node $k$ collects $N$ samples of $\mathbf{y}_k$. All nodes compress these to $N$ samples of $\widehat{\mathbf{y}}^{i}_k$ and also compute $\widehat{B}_k^i$ as in (\ref{eq:y_B_compress}).\;

  3) The nodes sum-and-forward their compressed data towards node $q$ via the recursive rule (\ref{eq:sum_fwd}) (and a similar rule for the $\widehat{B}_k^i$'s). Node $q$ eventually receives $N$ samples of $\widehat{\mathbf{y}}^{i}_{n\rightarrow q}$ given in (\ref{eq:sum_fwd_n}), and the matrix $\widehat{B}_{n\rightarrow q}^i$ defined in (\ref{eq:sum_fwd_n_B}), from all its neighbors $n\in\mathcal{N}_q$.\; 

  \At{Node $q$}
  {
    4a) Compute $\rho^i$ as in (\ref{eq:compute_rho}).\;
    4b) Compute (\ref{eq:compute_Xtilde}), i.e., execute a single Dinkelbach iteration by solving (\ref{eq:loc_aux_prob}), resulting in $\widetilde{X}_q^{*}$. If the solution is not unique, select a solution via (\ref{eq:select_sol}).\;
    4c) Partition $\widetilde{X}^{*}_q$ as in (\ref{eq:X_tilde_part}).\;
    4d) Disseminate every $G_n^{i+1}$ in the corresponding subgraph $\mathcal{B}_{nq}$.\;
  }
  
  5) Every node updates $X_k^{i+1}$ according to (\ref{eq:upd_tree}).\;
  
  $i\gets i+1$\;
  }
  \end{algorithm}
\end{figure}

We define $\widetilde{X}_q^{*}$ as the solution of (\ref{eq:loc_aux_prob}), i.e.,
\begin{equation}\label{eq:compute_Xtilde}
\resizebox{.48\textwidth}{!}{%
    $\widetilde{X}_q^{*}\triangleq\underset{\widetilde{X}_q\in\widetilde{\mathcal{S}}_q^i}{\text{argmin }}\varphi_1 \left(\widetilde{X}_q^T\widetilde{\mathbf{y}}_q^i(t),\widetilde{X}_q^T\widetilde{B}_q^i\right)-\rho^i\varphi_2\left(\widetilde{X}_q^T\widetilde{\mathbf{y}}_q^i(t),\widetilde{X}_q^T\widetilde{B}_q^i\right),$%
    }
\end{equation}
where $\widetilde{\mathcal{S}}_q^i$ is the constraint set of (\ref{eq:loc_aux_prob}). If (\ref{eq:loc_aux_prob}) does not have a unique solution, $\widetilde{X}_q^{*}$ is selected as the solution closest to $\widetilde{X}_q^{i}$ in (\ref{eq:X_fixed}), i.e.,
\begin{equation}\label{eq:select_sol}
  \widetilde{X}_q^{*}=\underset{\widetilde{X}_q\in\widetilde{\mathcal{X}}_q^i}{\text{argmin }}||\widetilde{X}_q-\widetilde{X}_q^i||_F
\end{equation}
where $\widetilde{\mathcal{X}}_q^i$ is the set of possible solutions $\widetilde{X}_q$ of (\ref{eq:loc_aux_prob}) and $\widetilde{X}_q^i$ is given in (\ref{eq:X_fixed}). Note that the Frobenius norm to select $\widetilde{X}_q^{*}$ in (\ref{eq:select_sol}) is arbitrary, and another distance function can be selected as long as it is continuous.

Finally, let us partition $\widetilde{X}_q^{*}$ as
\begin{equation}\label{eq:X_tilde_part}
  \widetilde{X}_q^{*}=[X_q^{(i+1)T},G_{n_1}^{(i+1)T},\dots,G_{n_{|\mathcal{N}_q|}}^{(i+1)T}]^T,
\end{equation}
such that $G_n$ is $Q\times Q$, $\forall n\in\mathcal{N}_q$, where the $G-$matrices consist of the part of $\widetilde{X}_q^{*}$ that is multiplied with the compressed signals of the neighbors of node $q$ in the inner product $\widetilde{X}_q^{*T}\widetilde{\mathbf{y}}_q^i$. Every matrix $G_{n}^{i+1}$ is then disseminated in the pruned network to the corresponding subgraph $\mathcal{B}_{nq}$ through node $n\in\mathcal{N}_q$, and every node $k\in\mathcal{K}$ can update its local variable $X_k$ as
\begin{equation}\label{eq:upd_tree}
  X_k^{i+1}=\begin{cases}
  X_q^{i+1} & \text{if $k=q$} \\
  X_k^{i}G_n^{i+1} & \text{if $k\in\mathcal{B}_{nq}$, $n\in\mathcal{N}_q$}.
  \end{cases}
\end{equation}
This process is then repeated by selecting different updating nodes at different iterations. Note that every neighbor $n\in\mathcal{N}_q$ of the updating node $q$ has a corresponding matrix $G_n$. As mentioned in Section \ref{sec:pruning}, if the pruning function $\mathcal{T}^i$ were to remove a link between an updating node and one (or multiple) of its neighbors, the convergence speed of the F-DASF algorithm would suffer from it. Indeed, this would translate in less degrees of freedom in the local problem (\ref{eq:loc_aux_prob}) itself, in turn leading to a fewer number of $G_n$ matrices used for updating the global variable $X$ as in (\ref{eq:upd_tree}).

Algorithm \ref{alg:f_dasf} summarizes the steps of the proposed F-DASF algorithm presented in this section. It is noted that at each iteration $i$, different batches of $N$ samples of signals $\mathbf{y}_k$ are used. This means that the iterations of the F-DASF algorithm are spread out over different signal windows across time, making it able to track changes in the signal statistics, similar to an adaptive filter (see Section \ref{sec:adaptive} for an example). A visual comparison of the difference between the DASF and F-DASF algorithms is provided in Figure \ref{fig:comparison}. The figure highlights how the F-DASF algorithm is computationally much more attractive compared to the original DASF method for fractional programs.

\begin{figure}[t]
  \includegraphics[width=0.48\textwidth]{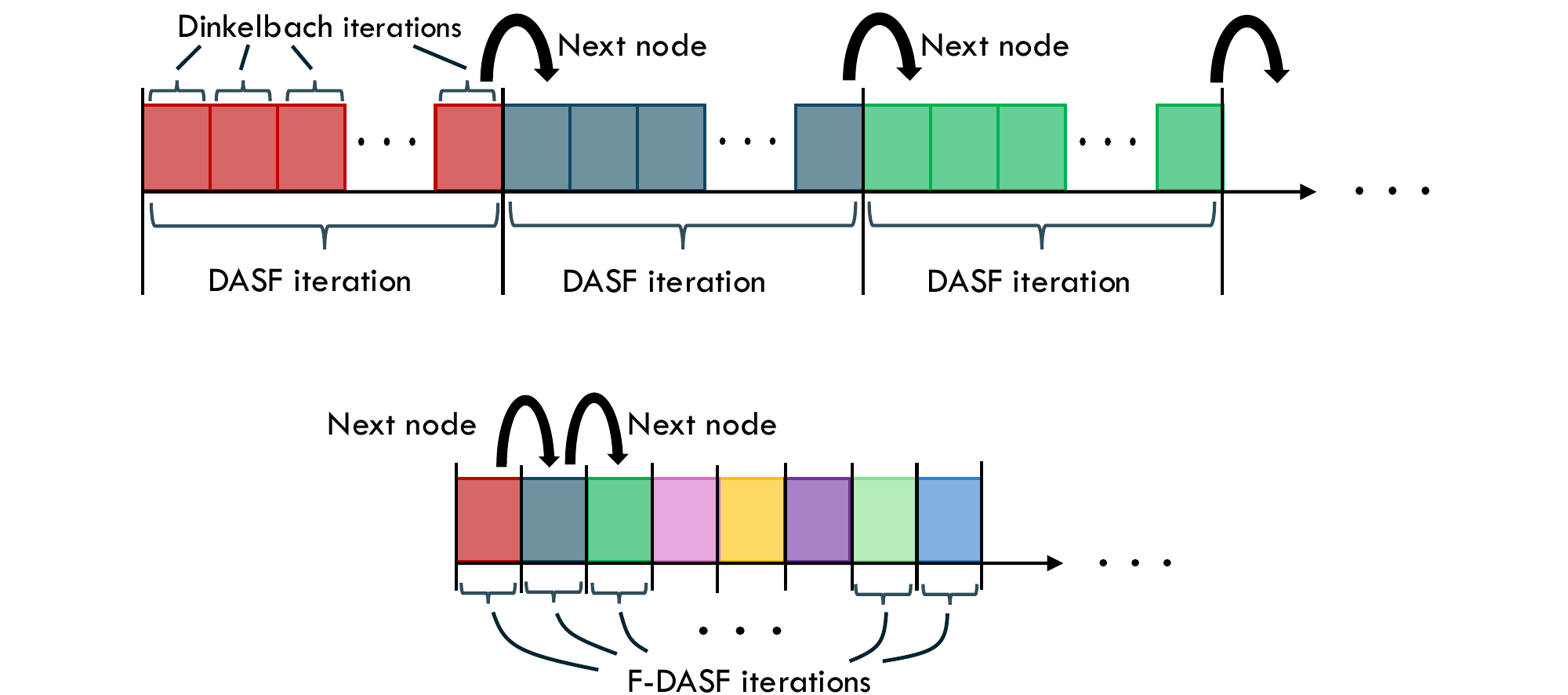}
  \caption{Visual comparison between the DASF (top) and F-DASF (bottom) algorithms. Each rectangle corresponds to one Dinkelbach iteration in both scenarios.}
  \label{fig:comparison}
\end{figure}

In Section \ref{sec:convergence}, we will show that the F-DASF algorithm converges to the solution of the centralized problem (\ref{eq:prob_g}), despite the fact that none of the nodes have access to the full signal $\mathbf{y}$. In fact, each node only has access to the fused data received from its neighbors, such that the full set of second-order statistics of $\mathbf{y}$ is never observable. While these statements are also valid for the DASF algorithm, the F-DASF method achieves the same result with a significantly reduced computational cost.

\begin{rem}\label{rem:f_dasf_interpretation}
  It is worth mentioning that although we have presented the F-DASF algorithm as applying a single iteration of Algorithm \ref{alg:dinkelbach} within a DASF outer loop, F-DASF can also be interpreted as applying a single iteration of the DASF algorithm \cite{musluoglu2022unifiedp1} in step $1$ of Algorithm \ref{alg:dinkelbach}. In the latter case, instead of solving the auxiliary problem (\ref{eq:aux_prob_g}) in a distributed fashion by fixing $\rho$ and applying DASF until convergence, we would only partially solve it by only performing a single iteration of DASF, i.e., a node solves its local problem and immediately updates $\rho$, which would again result in the F-DASF algorithm presented in this section.
\end{rem}

\subsection{Practical Considerations}

Although the F-DASF algorithm can directly be implemented in practice by following the steps presented in Algorithm \ref{alg:f_dasf}, we discuss here some practical considerations to take into account.
\subsubsection{Non-stationarity} When using real data, the assumptions on the statistical properties of the signals discussed in Section \ref{sec:adaptivity} might not always hold in practice. For example, it is well-known that electroencephalography (EEG) signals are highly non-stationary \cite{Siuly_2016}, yet fractional programming problems are commonplace in EEG signal processing (for example in the form of the well-known common spatial patterns algorithm \cite{blankertz2007optimizing,wang2007trace}). In this case, the resulting filters fit to the longer term average statistics of the signals. Furthermore, the spatial coherence across different channels is often more stationary than the signals produced by the underlying signal sources, in which case the filters will mainly track and exploit these more stable spatial relationships. For example, in microphone array processing, the speech sources themselves are highly non-stationary, but their spatial location is often fixed or only slowly time-varying. If even these more stable spatial relationships turn out to change too fast between different batches of data, one way to improve the tracking properties of the F-DASF algorithm would be to apply multiple iterations over the same signal set. Note that this would increase the communication cost within the network, as the transmitted signals $\widehat{\mathbf{y}}_{k\rightarrow n}^i$ in (\ref{eq:sum_fwd}) are still updated by the new values of $X_k$ at each iteration. The method described in \cite{musluoglu2022tracking} provides a good tradeoff between the tracking improvement and communication cost increase in such modifications.

\subsubsection{Effect of problem parameters}\label{sec:prob_params} Another important aspect to consider is how the problem setting parameters affect the algorithm performance (we also refer the reader to \cite{musluoglu2022unifiedp1} for further details on this). The number of columns $Q$ of $X$ is directly linked to the amount of compression applied to the raw signals $\mathbf{y}_k$ at each node $k$. A larger ratio $Q/M_k$ implies less compression, i.e., transmission of more informative signals, therefore faster convergence of the F-DASF algorithm, while smaller values of $Q$ lead to smaller communication costs per link at the expense of a slower convergence. Note that in various problems, the value of $Q$ might be imposed by the problem itself, e.g., when the aim is to project the data into a $2-$dimensional space, implying $Q=2$.

The network topology also affects the convergence speed, as more connected graphs result in larger degrees of freedom when updating the $X_k$'s in (\ref{eq:upd_tree}). This is because the average number of neighbors for each updating node will be higher such that fewer nodes will have to share the same $G-$matrices, i.e., the tree rooted in the updating node will have more branches.

As for the number of nodes in the network, the relationship between $K$ and the convergence is less straightforward than the other parameters discussed above. In general, the convergence speed during the first $K$ iterations is expected to be slower for larger $K$'s, as a full round of update where every node becomes the updating node would take longer to complete. This is especially significant if the starting estimate $X^0$ of $X$ is chosen randomly, since the random initialization components will be present until a full round of update is done. For large networks, it is therefore beneficial to start from a value $X^0$ that is fixed, e.g., using prior knowledge of the problem such as an old estimate. Alternatively, the first $K$ iterations of F-DASF could be done in parallel as a faster initialization step, allowing to map the $X_k$'s to the correct scale.

\subsubsection{Link and node failures} The F-DASF algorithm is robust to link failures as long as there are other paths within the graph where any node can reach any other one (note that a new spanning tree is formed at each iteration), and that a technical condition that relates the number of constraints with the network topology (see equation (\ref{eq:J_upper_bound_2}) in Section \ref{sec:convergence}) remains satisfied. This also holds in the case where a link connected to the updating node fails at the expense of reduced convergence speed due to the loss of degrees of freedom when updating the $X_k$'s. On the other hand, if a node fails in the long term, the F-DASF algorithm will (re-)converge to the optimal solution using the data from all remaining nodes, assuming that the rest of the network remains connected.

\subsubsection{Non-ideal Communication Links} In this paper, we have made abstraction of non-idealities in the communication links, i.e., we do not take quantization noise or data loss into account. We also assume a synchronized setting, implying that communication delays accumulate across the network, as nodes would wait until they receive the required data from their neighbors, before fusing and transmitting it to the next node. Furthermore, we assume that all nodes sample at the exact same sampling rate, which in practice requires synchronization protocols to compensate for inter-node sampling rate offsets.

\section{Technical Analysis and Convergence}\label{sec:convergence}

It can be shown that the F-DASF algorithm inherits the same convergence properties as the ones of the DASF algorithm detailed in \cite{musluoglu2022unifiedp2}, i.e., under mild technical constraints often satisfied in practice, the F-DASF algorithm converges to the optimal solution of the global problem (\ref{eq:prob_frac}). We note that these convergence results do not trivially follow from the convergence results of the original DASF algorithm. This is because DASF requires the updating nodes to \textit{fully} solve (\ref{eq:loc_prob}). Instead, the updating nodes in F-DASF only perform a \textit{single} iteration of Dinkelbach's procedure\footnote{Alternatively, as mentioned in Remark \ref{rem:f_dasf_interpretation}, this can also be viewed as a global Dinkelbach procedure in which a \textit{single} DASF iteration is performed at each iteration.}, which corresponds to solving (\ref{eq:loc_aux_prob}). Although some results from the convergence proof of DASF can be re-used, some non-trivial extensions are required to establish convergence of F-DASF. In this subsection, we will therefore concentrate on the results that differ largely from the DASF case, while similar ones will be briefly mentioned for completeness. In particular, in the DASF scenario, proving the monotonic decrease of the objective and the stationarity of fixed points of the algorithm relied on the fact that each node fully solves a compressed version of the global problem. The main challenge for the F-DASF scenario is to still obtain these results despite this difference, by using fractional programming results presented in Section \ref{sec:frac_prog}. Similar to the proof in \cite{musluoglu2022unifiedp2}, for mathematical tractability, we assume that all signals are stationary and that each node is able to perfectly estimate the statistics of its locally available signals, i.e., as if $N\rightarrow+\infty$. This means that the proof is only asymptotically valid, i.e., approximating settings with sufficiently large batch sizes $N$. In practice, estimation errors on these statistics (due to finite $N$) will result in the algorithm only converging to a neighborhood of the true optimal point, where the solution will ``hover'' around this optimal point.

\subsection{Preliminaries}
We start by describing the relationship between the local problems and the global problem which will be useful for the technical analyses that will be presented in the later parts of this section. 
From equations (\ref{eq:sum_fwd_n})-(\ref{eq:B_tilde}), we observe that there exists a matrix $C_q^i\in\mathbb{R}^{M\times \widetilde{M}_q}$ such that
\begin{equation}\label{eq:compress_y_B}
  \widetilde{\mathbf{y}}_q^i=C_q^{iT}\mathbf{y},\; \widetilde{B}_q^i=C_q^{iT}B,
\end{equation}
i.e., there exists a linear (compressive) relationship between the local data at the updating node $q$ and the global data $\mathbf{y}$ and $B$. 
An example of such a matrix $C_q^i$ is given in Figure \ref{fig:tree_diagram} for the network given in the same figure.
Note that, since the first $M_q$ entries in $\widetilde{\mathbf{y}}_q^i$ are equal to $\mathbf{y}_q$ (see (\ref{eq:tree_data})), the matrix $C_q^i$ must have the structure
\begin{equation}\label{eq:cqi_tree}
  C_q^i=\left[
    \begin{array}{c|c}
    0 &  \\
    I_{M_q} & * \\
    0 & 
    \end{array}
    \right],
\end{equation}
where $I_{M_q}$ is placed in the $q-$th block-row corresponding to the position of the entries $\mathbf{y}_q$ in $\mathbf{y}$.

Similar to (\ref{eq:filtered_data}), we observe that when node $q$ applies a generic spatial filter $\widetilde{X}_q\in\mathbb{R}^{\widetilde{M}_q\times Q}$ to $\widetilde{\mathbf{y}}_q^i$, this is equivalent to applying a generic spatial filter $X\in\mathbb{R}^{M\times Q}$ to the network-wide sensor signal $\mathbf{y}$, i.e., $\widetilde{X}_q^T\widetilde{\mathbf{y}}_q^i=X^T\mathbf{y}$. With (\ref{eq:compress_y_B}), we then have $\widetilde{X}_q^{T}\widetilde{\mathbf{y}}_q^i=\widetilde{X}_q^{T}(C_q^{iT}\mathbf{y})=(C_q^i\widetilde{X}_q)^T\mathbf{y}$, while following similar steps for the deterministic matrix $B$ gives $\widetilde{X}_q^{T}\widetilde{B}_q^i=\widetilde{X}_q^{T}(C_q^{iT}B)=(C_q^i\widetilde{X}_q)^TB$, such that
\begin{equation}\label{eq:param_X}
  X=C_q^i\widetilde{X}_q.
\end{equation}
When comparing (\ref{eq:aux_prob_g}) with (\ref{eq:loc_aux_prob}), we observe that their objective and constraint functions are indeed the same if $\widetilde{X}_q^T\widetilde{\mathbf{y}}_q^i=X^T\mathbf{y}$.
This implies that if $\widetilde{X}_q$ is a feasible point of the local auxiliary problem (\ref{eq:loc_aux_prob}), the point $X$ parameterized by (\ref{eq:param_X}) is a feasible point of the global auxiliary problem (\ref{eq:aux_prob_g}), and vice versa, as formalized in the following lemma (the proof is equivalent to the proof provided in \cite{musluoglu2022unifiedp1} for the original DASF algorithm and is therefore omitted here).
\begin{lem}\label{lem:X_in_constraints}
  For any iteration $i>0$ of Algorithm \ref{alg:f_dasf},
  \begin{equation}\label{eq:loc_glob}
    \widetilde{X}_q\in\widetilde{\mathcal{S}}_q^i\iff C_q^i\widetilde{X}_q\in\mathcal{S}.
  \end{equation}
  In particular, $X^i\in\mathcal{S}$ and $\widetilde{X}_q^{i}\in\widetilde{\mathcal{S}}_q^i$ for all $i>0$, where $\widetilde{X}_q^i$ is defined in (\ref{eq:X_fixed}).
\end{lem}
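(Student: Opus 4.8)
The plan is to derive the biconditional (\ref{eq:loc_glob}) directly from the compression identity (\ref{eq:compress_y_B}), and then to use it, together with a couple of structural identities for $C_q^i$, to obtain the two ``in particular'' claims. For the biconditional, I would start from (\ref{eq:compress_y_B}), which states $\widetilde{\mathbf{y}}_q^i=C_q^{iT}\mathbf{y}$ and $\widetilde{B}_q^i=C_q^{iT}B$, so that for any $\widetilde{X}_q\in\mathbb{R}^{\widetilde{M}_q\times Q}$ and $X\triangleq C_q^i\widetilde{X}_q$ we have $\widetilde{X}_q^T\widetilde{\mathbf{y}}_q^i=(C_q^i\widetilde{X}_q)^T\mathbf{y}=X^T\mathbf{y}$ and, identically, $\widetilde{X}_q^T\widetilde{B}_q^i=X^TB$ (this is exactly (\ref{eq:param_X})). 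Substituting these into every constraint function gives $\eta_j(\widetilde{X}_q^T\widetilde{\mathbf{y}}_q^i,\widetilde{X}_q^T\widetilde{B}_q^i)=\eta_j(X^T\mathbf{y},X^TB)$ for all $j\in\mathcal{J}$; since $\widetilde{\mathcal{S}}_q^i$ is by definition the feasible set of (\ref{eq:loc_aux_prob}) (the $\widetilde{X}_q$ satisfying the left-hand sides) and $\mathcal{S}$ is the feasible set of (\ref{eq:aux_prob_g}) (the $X$ satisfying the right-hand sides), the two conditions are the same system of (in)equalities, which is (\ref{eq:loc_glob}). This step uses nothing but (\ref{eq:compress_y_B}), so it holds for every $i\geq 0$.

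Next I would record the two matrix identities $C_q^i\widetilde{X}_q^{i+1}=X^{i+1}$ and $C_q^i\widetilde{X}_q^{i}=X^{i}$ (with $\widetilde{X}_q^i$ as in (\ref{eq:X_fixed})). This requires writing out the block structure of $C_q^i$ that is only sketched in (\ref{eq:cqi_tree}): from (\ref{eq:sum_fwd_n})--(\ref{eq:compress_y_B}) its first block-column carries $I_{M_q}$ in the $q$-th block-row, and its $(1+j)$-th block-column carries $X_k^i$ in the $k$-th block-row for every $k\in\mathcal{B}_{n_jq}$ (zeros elsewhere), where $\{q\}$ and the subgraphs $\{\mathcal{B}_{n_jq}\}_j$ partition $\mathcal{K}$. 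Combining this with the partition (\ref{eq:X_tilde_part}) of $\widetilde{X}_q^{i+1}$ and the update rule (\ref{eq:upd_tree}), a block-by-block multiplication shows that block-row $q$ of $C_q^i\widetilde{X}_q^{i+1}$ equals $X_q^{i+1}$ and block-row $k$ equals $X_k^iG_{n_j}^{i+1}$ for $k\in\mathcal{B}_{n_jq}$, i.e.\ exactly $X^{i+1}$. The same computation with $\widetilde{X}_q^i=[X_q^{iT},I_Q,\dots,I_Q]^T$ replaces the $G$-blocks by $I_Q$ and yields $C_q^i\widetilde{X}_q^i=[X_1^{iT},\dots,X_K^{iT}]^T=X^i$; this is merely the matrix-level counterpart of (\ref{eq:filtered_data}).

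Chaining these facts finishes the proof. The matrix $\widetilde{X}_q^{i+1}$ defined in (\ref{eq:compute_Xtilde}) is a minimizer of the local auxiliary problem (\ref{eq:loc_aux_prob}), and by Assumption \ref{asmp:well_posed}, applied to the (compressed) auxiliary problem, such a minimizer exists, so $\widetilde{X}_q^{i+1}\in\widetilde{\mathcal{S}}_q^i$. The biconditional then gives $X^{i+1}=C_q^i\widetilde{X}_q^{i+1}\in\mathcal{S}$ for all $i\geq 0$, i.e.\ $X^i\in\mathcal{S}$ for all $i>0$ (the exclusion $i=0$ is only because $X^0$ is initialized arbitrarily and need not be feasible). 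Feeding $X^i\in\mathcal{S}$ back through the reverse direction of (\ref{eq:loc_glob}) together with $C_q^i\widetilde{X}_q^i=X^i$ then yields $\widetilde{X}_q^i\in\widetilde{\mathcal{S}}_q^i$ for all $i>0$.

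I do not expect any genuine analytic difficulty here; the content is essentially bookkeeping, and the argument mirrors the one for the original DASF algorithm in \cite{musluoglu2022unifiedp1}. The one place that needs care is the block-matrix verification $C_q^i\widetilde{X}_q=X$: one has to match each block-row of $C_q^i$ to the node it represents, use that $\{q\}$ and the subgraphs $\mathcal{B}_{nq}$ form a disjoint cover of $\mathcal{K}$, and check that right-multiplication by $G_n^{i+1}$ reproduces exactly (\ref{eq:upd_tree}). A secondary, minor point is justifying that $\widetilde{X}_q^{i+1}$ actually exists, i.e.\ that the compressed auxiliary problem inherits a nonempty solution set from Assumption \ref{asmp:well_posed}; this relies on the remark in Section \ref{sec:assumptions} that the auxiliary problems (and their compressed instances) satisfy the same regularity conditions as (\ref{eq:prob_g}) whenever the latter does.
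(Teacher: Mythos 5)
Your proof is correct and follows essentially the same route as the paper, which itself only sketches the argument (the identity of the local and global constraint functions under the parameterization $X=C_q^i\widetilde{X}_q$, stated just before the lemma) and defers the details to the original DASF paper. Your block-by-block verification of $C_q^i\widetilde{X}_q^i=X^i$ and $C_q^i\widetilde{X}_q^{i+1}=X^{i+1}$, together with the feasibility of the local minimizer, is exactly the bookkeeping that the omitted proof consists of; the only cosmetic nit is that the existence of $\widetilde{X}_q^{i+1}$ is formally covered by Condition~\ref{cond:continuity} (which transfers Assumption~\ref{asmp:well_posed} to the local auxiliary problems) rather than by Assumption~\ref{asmp:well_posed} directly.
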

We note that this lemma always holds, independent of whether the initialization point of F-DASF is in $\mathcal{S}$ or not. In particular, the last sentence of the lemma is important as it implies that every output $X^i$ of the F-DASF algorithm is a feasible point of the global auxiliary problem (\ref{eq:aux_prob_g}) and hence also of the global problem (\ref{eq:prob_g}) since they have the same constraint set.

The relationship (\ref{eq:param_X}) also allows us to compactly write Problem (\ref{eq:loc_aux_prob}) using the functions $f_j$ and $h_j$ defined in (\ref{eq:r_f_h}):
\begin{equation}\label{eq:loc_aux_prob_compact}
  \begin{aligned}
      \underset{\widetilde{X}_q\in\mathbb{R}^{\widetilde{M}_q\times Q}}{\text{minimize } } \quad & f(C_q^i\widetilde{X}_q,\rho^i)=f_1\left(C_q^i\widetilde{X}_q\right)-\rho^i f_2\left(C_q^i\widetilde{X}_q\right)\\
    \textrm{subject to } \quad & h_j\left(C_q^i\widetilde{X}_q\right)\leq 0,\;\textrm{ $\forall j\in\mathcal{J}_I$},\\
     & h_j\left(C_q^i\widetilde{X}_q\right)=0,\;\textrm{ $\forall j\in\mathcal{J}_E$},
  \end{aligned}
\end{equation}
where
\begin{equation}\label{eq:compute_rho_f}
  \rho^i=\frac{f_1\big(C_q^i\widetilde{X}_q^i\big)}{f_2\big(C_q^i\widetilde{X}_q^i\big)}.
\end{equation}
In the remaining parts of this section, we will often refer to the compact notation in equations (\ref{eq:prob_frac}), (\ref{eq:aux_prob}) and (\ref{eq:loc_aux_prob_compact}) to refer to the global problem, its auxiliary problems, and the compressed auxiliary problem, respectively (instead of (\ref{eq:prob_g}), (\ref{eq:aux_prob_g}) and (\ref{eq:loc_aux_prob})) for notational convenience.

\subsection{Convergence in Objective}

The following theorem establishes the convergence of the auxiliary parameter $\rho^i$, which then also implies that there is convergence in the objective of (\ref{eq:prob_frac}) / (\ref{eq:prob_g}) via the relationship in (\ref{eq:compute_rho}).

\begin{thm}\label{thm:monotonic}
  The sequence $(\rho^i)_i$ generated by Algorithm \ref{alg:f_dasf} is a non-increasing, converging sequence.
\end{thm}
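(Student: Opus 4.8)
The plan is to show that $\rho^{i+1} \leq \rho^i$ for every iteration $i$, which establishes monotonicity; convergence then follows because $\rho^i = r(X^i)$ is bounded below by $\rho^*$ (Lemma \ref{lem:frac_min} together with $X^i \in \mathcal{S}$ from Lemma \ref{lem:X_in_constraints}, recalling that $\mathcal{S}$ is compact and $f_2 > 0$ on $\mathcal{S}$), so a bounded monotone sequence converges. The core of the argument is therefore the one-step decrease $\rho^{i+1} \leq \rho^i$.

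First I would exploit the structure of a single Dinkelbach iteration, exactly as in the centralized convergence proof (Theorem \ref{thm:dinkelbach_convergence}). At iteration $i$, node $q$ solves the compressed auxiliary problem (\ref{eq:loc_aux_prob_compact}), i.e., it minimizes $f(C_q^i \widetilde{X}_q, \rho^i)$ over $\widetilde{X}_q \in \widetilde{\mathcal{S}}_q^i$. By Lemma \ref{lem:X_in_constraints}, the point $\widetilde{X}_q^i$ of (\ref{eq:X_fixed}) is feasible for this problem, and it parameterizes $X^i$ via (\ref{eq:param_X}) (the $I_Q$ blocks of $\widetilde{X}_q^i$ make $C_q^i \widetilde{X}_q^i = X^i$, which follows from (\ref{eq:filtered_data})). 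Since $\widetilde{X}_q^{i+1}$ is the minimizer and parameterizes $X^{i+1} = C_q^i \widetilde{X}_q^{i+1}$, feasibility of $\widetilde{X}_q^i$ gives
\begin{equation}
  f(X^{i+1}, \rho^i) = f_1(X^{i+1}) - \rho^i f_2(X^{i+1}) \leq f_1(X^i) - \rho^i f_2(X^i) = f(X^i, \rho^i).
\end{equation}
By the definition $\rho^i = f_1(X^i)/f_2(X^i)$ and $f_2(X^i) > 0$, the right-hand side equals $0$. Hence $f_1(X^{i+1}) - \rho^i f_2(X^{i+1}) \leq 0$, and since $f_2(X^{i+1}) > 0$ (again because $X^{i+1} \in \mathcal{S}$, Lemma \ref{lem:X_in_constraints}), dividing yields $\rho^{i+1} = f_1(X^{i+1})/f_2(X^{i+1}) \leq \rho^i$. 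This is precisely the standard Dinkelbach monotonicity computation, transplanted to the compressed problem via the parameterization lemma.

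The only subtlety — and the step I expect to need the most care — is making sure that the feasible point used to lower-bound the optimum of the compressed auxiliary problem is $\widetilde{X}_q^i$ itself, so that the parameterized point is exactly the previous iterate $X^i$ and not some other feasible point; this is what closes the loop and makes $\rho^i$ appear on the right-hand side. This hinges on the last sentence of Lemma \ref{lem:X_in_constraints} ($\widetilde{X}_q^i \in \widetilde{\mathcal{S}}_q^i$) and on the identity $C_q^i \widetilde{X}_q^i = X^i$, both of which are already available. A minor bookkeeping point is the base case: the inequality is derived for $i > 0$, consistent with the "for any iteration $i>0$" qualifier in Lemma \ref{lem:X_in_constraints}, and $\rho^0 = r(X^0)$ is finite since $X^0$ can be taken in $\mathcal{S}$ (or the first iterate $X^1 \in \mathcal{S}$ regardless). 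With monotonicity and the lower bound $\rho^*$ in hand, the monotone convergence theorem finishes the proof.
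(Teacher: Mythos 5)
Your proposal is correct and follows essentially the same route as the paper's proof: both use the feasibility of $\widetilde{X}_q^i$ (Lemma \ref{lem:X_in_constraints}) and the identity $C_q^i\widetilde{X}_q^i = X^i$ to compare the optimal value of the compressed auxiliary problem against $f(X^i,\rho^i)=0$, then divide by $f_2(X^{i+1})>0$ to obtain $\rho^{i+1}\leq\rho^i$, and conclude by boundedness below by $\rho^*$. No gaps; the "subtlety" you flag is exactly the step the paper also leans on.
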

\begin{proof}
  From (\ref{eq:compute_rho_f}), we have $f\big(C_q^i\widetilde{X}_q^i,\rho^i\big)$ $=f_1\big(C_q^i\widetilde{X}_q^i\big)-\rho^i f_2\big(C_q^i\widetilde{X}_q^i\big)=0$. Since $\widetilde{X}_q^{*}$ solves (\ref{eq:loc_aux_prob_compact}), we have that $f\big(C_q^i\widetilde{X}_q^{*},\rho^i\big)\leq f\big(C_q^i\widetilde{X}_q,\rho^i\big)$ for any $\widetilde{X}_q\in\widetilde{\mathcal{S}}_q^i$. We note that $\widetilde{X}_q^i$ as defined in (\ref{eq:X_fixed}) is a feasible point of (\ref{eq:loc_aux_prob}), i.e., belongs to $\widetilde{\mathcal{S}}_q^i$ as is shown in Lemma \ref{lem:X_in_constraints}. We then write $f\big(C_q^i\widetilde{X}_q^{*},\rho^i\big)\leq f\big(C_q^i\widetilde{X}_q^i,\rho^i\big)=0$. After reordering the terms of $f\big(C_q^i\widetilde{X}_q^{*},\rho^i\big)$, we obtain $\frac{f_1\big(C_q^i\widetilde{X}_q^{*}\big)}{f_2\big(C_q^i\widetilde{X}_q^{*}\big)}$ $=\rho^{i+1}\leq \rho^i$. Therefore, the sequence $(\rho^i)_i$ is monotonic non-increasing and since it is lower bounded by $\rho^*$, it must converge.
\end{proof}

\noindent We note that convergence of $(\rho^i)_i$ does not necessarily imply convergence of the underlying sequence $(X^i)_i$, nor the optimality of the resulting accumulation point, which is the topic of the next two subsections.

\subsection{Stationarity of Fixed Points}
A fixed point of the F-DASF algorithm is defined as a point $\widebar{X}\in\mathcal{S}$ that is invariant under any F-DASF update step (for any updating node $q$), i.e., $X^i=\widebar{X}$, $\forall i>0$ when initializing Algorithm \ref{alg:f_dasf} with $X^0=\widebar{X}$. We will now present results that guarantee that such fixed points of the F-DASF algorithm are stationary points of the global problem (\ref{eq:prob_g}) under mild technical conditions that are akin to the standard LICQ conditions in the optimization literature \cite{peterson1973review}.

\begin{subcond}{cond}\label{subcond:lin_indep}
  \begin{cond}\label{cond:lin_indep_1}
    For a fixed point $\widebar{X}$ of Algorithm \ref{alg:f_dasf}, the elements of the set $\{\widebar{X}^{T}\nabla_X h_j(\widebar{X})\}_{j\in\mathcal{J}}$ are linearly independent.
  \end{cond}

  \noindent Condition \ref{cond:lin_indep_1} requires the linear independence of a set of $J$ matrices of size $Q\times Q$, which can only be satisfied if
  \begin{equation}\label{eq:J_upper_bound_1}
    J\leq Q^2.
  \end{equation}
  Although Condition \ref{cond:lin_indep_1} requires checking linear independence at fixed points, which are usually unknown, the cases where Condition \ref{cond:lin_indep_1} would be violated (while (\ref{eq:J_upper_bound_1}) is satisfied) are very contrived and highly improbable \cite{musluoglu2022unifiedp2}. Additionally, this condition can be verified beforehand for some commonly encountered constraint sets, such as the generalized Stiefel manifold. When Condition \ref{cond:lin_indep_1} is satisfied, we obtain a first result on the stationarity of the fixed points of the F-DASF algorithm.

  \begin{thm}\label{thm:stationarity}
    If Condition \ref{cond:lin_indep_1} is satisfied for a fixed point $\widebar{X}$ of Algorithm \ref{alg:f_dasf}, then $\widebar{X}$ is a stationary point of both (i) the auxiliary problem (\ref{eq:aux_prob}) (or (\ref{eq:aux_prob_g})) for $\rho=r(\widebar{X})$ and (ii) the global problem (\ref{eq:prob_frac}) (or (\ref{eq:prob_g})), satisfying their KKT conditions.
  \end{thm}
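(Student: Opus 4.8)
The plan is to exploit the equivalence between the local compressed auxiliary problem (\ref{eq:loc_aux_prob_compact}) and the global auxiliary problem (\ref{eq:aux_prob}) via the parameterization $X = C_q^i\widetilde{X}_q$ in (\ref{eq:param_X}), together with the structure of the transition matrix $C_q^i$ in (\ref{eq:cqi_tree}). Since $\widebar{X}$ is a fixed point, applying an F-DASF update with updating node $q$ leaves $\widebar{X}$ unchanged, which means $\widetilde{X}_q^{i+1}=\widetilde{X}_q^i$ as given in (\ref{eq:X_fixed}); in other words, $\widetilde{X}_q^i$ (the matrix with $\widebar{X}_q$ stacked on top of identity blocks) is a solution of the local auxiliary problem (\ref{eq:loc_aux_prob_compact}) with $\rho = r(\widebar{X})$. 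By Assumption \ref{asmp:kkt} applied to the (compressed) auxiliary problem, this solution satisfies the KKT conditions of (\ref{eq:loc_aux_prob_compact}). I would then write out these KKT conditions explicitly: there exist multipliers $\{\lambda_j\}_{j\in\mathcal{J}^*}$ such that $C_q^{iT}\nabla_X f(\widebar{X},\rho) + \sum_{j\in\mathcal{J}^*}\lambda_j\, C_q^{iT}\nabla_X h_j(\widebar{X}) = 0$, using the chain rule $\nabla_{\widetilde{X}_q}\,\psi(C_q^i\widetilde{X}_q) = C_q^{iT}\nabla_X\psi(C_q^i\widetilde{X}_q)$ for $\psi \in \{f,h_j\}$, plus primal feasibility (which is Lemma \ref{lem:X_in_constraints}) and complementary slackness.

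The core step is to upgrade this \emph{local} KKT condition into a \emph{global} one, i.e., to show $\nabla_X f(\widebar{X},\rho) + \sum_{j\in\mathcal{J}^*}\lambda_j\nabla_X h_j(\widebar{X}) = 0$ without the $C_q^{iT}$ premultiplication. Here I would use the block structure (\ref{eq:cqi_tree}): the first factor $C_q^{iT}$ acting on a gradient extracts, roughly, the $q$-th block row and a $Q\times Q$-compressed combination of the other block rows. Left-multiplying the stationarity identity instead by $\widebar{X}^T$ and combining across \emph{all} choices of updating node $q$ (the algorithm cycles through $q = (i\bmod K)+1$, so every node eventually updates), one obtains, for each $q$, that a particular $Q\times Q$ projection of $\nabla_X f(\widebar{X},\rho) + \sum_j\lambda_j^{(q)}\nabla_X h_j(\widebar{X})$ vanishes. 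Condition \ref{cond:lin_indep_1} — linear independence of $\{\widebar{X}^T\nabla_X h_j(\widebar{X})\}_{j\in\mathcal{J}}$ — is exactly what is needed to conclude that the multipliers $\lambda_j^{(q)}$ are independent of $q$ (call them $\lambda_j$) and, more importantly, to pin down that the residual gradient $\nabla_X f(\widebar{X},\rho) + \sum_j\lambda_j\nabla_X h_j(\widebar{X})$ is forced to be zero in \emph{all} block rows, not just those reachable through a single $C_q^i$. I expect this is where the argument from \cite{musluoglu2022unifiedp2} is reused almost verbatim: the full-network gradient decomposes block-wise, each $C_q^i$ certifies that block $q$ (and a combination of the hidden blocks) is stationary, and ranging over $q\in\mathcal{K}$ plus the linear-independence condition closes the gap.

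Finally, once global stationarity of the auxiliary problem at $\widebar{X}$ with $\rho = r(\widebar{X})$ is established — this is part (i) — part (ii) follows from the relationship between Dinkelbach's auxiliary problem and the original fractional program. Concretely, $\nabla_X r(\widebar{X}) = \frac{1}{f_2(\widebar{X})}\big(\nabla_X f_1(\widebar{X}) - r(\widebar{X})\nabla_X f_2(\widebar{X})\big) = \frac{1}{f_2(\widebar{X})}\nabla_X f(\widebar{X},\rho)$ with $\rho = r(\widebar{X})$, and since $f_2(\widebar{X})>0$ on $\mathcal{S}$ the KKT system of (\ref{eq:aux_prob}) at $\widebar{X}$ (with multipliers $\lambda_j$) transforms into the KKT system of (\ref{eq:prob_frac}) at $\widebar{X}$ (with multipliers $\lambda_j/f_2(\widebar{X})$), using that the constraint functions $h_j$ are identical in both problems. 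The main obstacle is the block-wise "gap-closing" step in the previous paragraph: it is the only place where the geometry of the pruned-tree compression genuinely enters, and making it rigorous requires carefully tracking how the off-diagonal block $*$ in (\ref{eq:cqi_tree}) — which itself depends on the fixed point $\widebar{X}$ through the products $X_k^i G_n^{i+1}$ in (\ref{eq:upd_tree}) — interacts with the linear-independence hypothesis; everything else is routine chain rule and Dinkelbach bookkeeping.
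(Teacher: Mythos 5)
Your proposal is correct and follows essentially the same route as the paper's proof: at a fixed point the local KKT stationarity $C_q^{iT}\nabla_X\big[f(\widebar{X},\widebar{\rho})+\sum_j\lambda_j(q)h_j(\widebar{X})\big]=0$ yields the $q$-th block row of the global stationarity condition via the $I_{M_q}$ block of $C_q^i$, stacking over all updating nodes $q$ covers every block, and left-multiplying by $\widetilde{X}_q^{iT}$ (so that $C_q^i\widetilde{X}_q^i=\widebar{X}$ produces the $q$-independent quantity $\widebar{X}^T\nabla_X f(\widebar{X},\widebar{\rho})$) combined with Condition \ref{cond:lin_indep_1} forces the multipliers $\lambda_j(q)$ to coincide across nodes, after which part (ii) follows by rescaling the multipliers by $1/f_2(\widebar{X})>0$ exactly as you describe. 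The only minor remark is that the off-diagonal block $*$ in (\ref{eq:cqi_tree}) never needs to be tracked: the gap-closing step uses only the identity block and the relation $C_q^i\widetilde{X}_q^i=\widebar{X}$, so the difficulty you flag at the end does not actually arise.
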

  \begin{proof}
      We will first show that a fixed point of the F-DASF algorithm is a KKT point of the auxiliary problem (\ref{eq:aux_prob}). Using this result, we will derive the steps to show that such a point is also a KKT point of the global problem (\ref{eq:prob_frac}).
  
  The KKT conditions of the auxiliary problem (\ref{eq:aux_prob}) can be written as:
  \begin{align}
    &\nabla_X \mathcal{L}_{\rho}(X,\rho,\bm{\lambda})=0,\label{eq:stat}\\
    &h_j(X)\leq 0\;\textrm{ $\forall j\in\mathcal{J}_I$, } h_j(X)=0\;\textrm{ $\forall j\in\mathcal{J}_E$,}\label{eq:pf}\\
    &\lambda_j\geq 0\;\textrm{ $\forall j\in\mathcal{J}_I$,}\label{eq:dual}\\
    &\lambda_j h_j(X)=0\;\textrm{ $\forall j\in\mathcal{J}_I$,}\label{eq:comp_slack}
  \end{align}
  where 
  \begin{equation}\label{eq:lagrangian}
    \mathcal{L}_{\rho}(X,\rho,\bm{\lambda})\triangleq f(X,\rho)+\sum_{j\in\mathcal{J}} \lambda_jh_j(X)
  \end{equation}
  is the Lagrangian of (\ref{eq:aux_prob}). We use $\bm{\lambda}$ in bold as a shorthand notation for the set of all Lagrange multipliers $\lambda_j\in\mathbb{R}$ corresponding to the constraint $h_j$.
  
  At iteration $i$, the updating node $q$ solves the local auxiliary problem (\ref{eq:loc_aux_prob_compact}) which has the following Lagrangian
  \begin{equation}\label{eq:lagrangian_local}
    \widetilde{\mathcal{L}}_{\rho}(\widetilde{X}_q,\rho^i,\widetilde{\bm{\lambda}}(q))\triangleq f(C_q^i\widetilde{X}_q,\rho^i)+\sum_{j\in\mathcal{J}} \lambda_j(q)h_j(C_q^i\widetilde{X}_q),
  \end{equation}
  where the $\lambda_j(q)$'s are the Lagrange multipliers at updating node $q$ and iteration $i$ corresponding to the local problem (\ref{eq:loc_aux_prob_compact}) and $\widetilde{\bm{\lambda}}(q)$ denotes their collection. Since $\widetilde{X}_q^{*}$ obtained in step $4$b of Algorithm \ref{alg:f_dasf} solves the local problem at node $q$ and iteration $i$, it must satisfy the KKT conditions of the local problem. In particular, the stationarity condition is given as
  \begin{equation}\label{eq:optimality_local}
    \nabla_{\widetilde{X}_q}\widetilde{\mathcal{L}}_{\rho}(\widetilde{X}_q^{*},\rho^i,\widetilde{\bm{\lambda}}(q))=0.
  \end{equation}
  From the parameterization $X=C_q^i\widetilde{X}_q$ in (\ref{eq:param_X}), we have
  \begin{equation}\label{eq:lagrangians}
    \widetilde{\mathcal{L}}_{\rho}(\widetilde{X}_q,\rho^i,\widetilde{\bm{\lambda}}(q))=\mathcal{L}_{\rho}(C_q^i\widetilde{X}_q,\rho^i,\widetilde{\bm{\lambda}}(q)),
  \end{equation}
  allowing us to apply the chain rule on (\ref{eq:optimality_local}) to obtain
  \begin{equation}\label{eq:local_lagrange}
    C_q^{iT}\nabla_X\mathcal{L}_{\rho}(C_q^{i}\widetilde{X}_q^{*},\rho^i,\widetilde{\bm{\lambda}}(q))=0.
  \end{equation}
  Using (\ref{eq:lagrangian}), and the parameterized notation in (\ref{eq:loc_aux_prob_compact}), we find the KKT conditions of the local problem:
  \begin{align}
    &C_q^{iT}\nabla_X\Big[f\left(C_q^i\widetilde{X}_q^{*},\rho^i\right)+\sum_{j\in\mathcal{J}}\lambda_j(q)h_j\left(C_q^i\widetilde{X}_q^{*}\right)\Big]=0, \label{eq:stat_d}\\
    &h_j\left(C_q^i\widetilde{X}_q^{*}\right)\leq 0\;\textrm{ $\forall j\in\mathcal{J}_I$, } h_j\left(C_q^i\widetilde{X}_q^{*}\right)=0\;\textrm{ $\forall j\in\mathcal{J}_E$,}  \label{eq:pf_d}\\
    &\lambda_j(q)\geq 0\;\textrm{ $\forall j\in\mathcal{J}_I$,} \label{eq:dual_d}\\
    &\lambda_j(q)h_j\left(C_q^i\widetilde{X}_q^{*}\right)=0\;\textrm{ $\forall j\in\mathcal{J}_I$,} \label{eq:cs_d}
  \end{align}
  satisfied by $X^{i+1}=C_q^i\widetilde{X}_q^{*}$ and its corresponding set of Lagrange multipliers $\widetilde{\bm{\lambda}}(q)$. Our aim is now to show that at a fixed point of the F-DASF algorithm, i.e., a point such that $X^{i+1}=X^i=\widebar{X}$, the KKT conditions (\ref{eq:stat_d})-(\ref{eq:cs_d}) of the local problem are equivalent to the KKT conditions (\ref{eq:stat})-(\ref{eq:comp_slack}) of the global problem.
  
  Let us first look at the local stationarity condition under the fixed point assumption $X^{i+1}=X^i=\widebar{X}$, which allows us to replace $C_q^i\widetilde{X}_q^{*}=X^{i+1}$ with $C_q^i\widetilde{X}_q^i=X^i=\widebar{X}$, also implying $\rho^{i+1}=\rho^i=\widebar{\rho}=r(\widebar{X})$. Making these changes in (\ref{eq:stat_d}) result in
  \begin{equation}\label{eq:stat_equi}
    C_q^{iT}\nabla_X\Big[f(\widebar{X},\widebar{\rho})+\sum_{j\in\mathcal{J}}\lambda_j(q)h_j(\widebar{X})\Big]=0.
  \end{equation}
  From the structure of $C_q^i$ as displayed in (\ref{eq:cqi_tree}), we observe that the first $M_q$ rows of (\ref{eq:stat_equi}) will be equal to
  \begin{equation}\label{eq:Aq_stat}
    \nabla_{X_q}f(\widebar{X},\widebar{\rho})+\sum_{j\in\mathcal{J}}\lambda_j(q)\nabla_{X_q}h_j(\widebar{X})=0.
  \end{equation} 
  Additionally, the fixed point assumption is independent of the updating node $q$, i.e., whichever node $q$ is selected to be the updating node, the expression (\ref{eq:stat_equi}) will be true. Therefore, (\ref{eq:stat_equi}) and hence (\ref{eq:Aq_stat}) simultaneously hold for any node $q$. Stacking the variations of (\ref{eq:Aq_stat}) for each node $q$ then leads to
  \begin{equation}\label{eq:stat_mat}
    \resizebox{.45\textwidth}{!}{%
    $\begin{bmatrix}
      \nabla_{X_1}f(\widebar{X},\widebar{\rho})\\
      \vdots\\
      \nabla_{X_K}f(\widebar{X},\widebar{\rho})
    \end{bmatrix}=\nabla_Xf(\widebar{X},\widebar{\rho})=-\begin{bmatrix}
      \sum_{j\in\mathcal{J}}\lambda_j(1)\nabla_{X_1}h_j(\widebar{X})\\
      \vdots\\
      \sum_{j\in\mathcal{J}}\lambda_j(K)\nabla_{X_K}h_j(\widebar{X})
    \end{bmatrix}.$%
    }
  \end{equation}
  
  Let us now return to (\ref{eq:stat_equi}) and multiply it by $\widetilde{X}_q^{iT}$ from the left, where $\widetilde{X}_q^i$ is defined in (\ref{eq:X_fixed}). From the fact that $C_q^i\widetilde{X}_q^i=X^i=\widebar{X}$, we can write
  \begin{equation}\label{eq:lin_indep_1}
    \widebar{X}^{T}\nabla_X f(\widebar{X},\widebar{\rho})=-\sum_{j\in\mathcal{J}}\lambda_j(q)\widebar{X}^{T}\nabla_X h_j(\widebar{X}).
  \end{equation}
  From the linear independence of the set $\{\widebar{X}^{T}\nabla_X h_j(\widebar{X})\}_j$ implied by Condition \ref{cond:lin_indep_1}, we obtain the result that the Lagrange multipliers $\{\lambda_j(q)\}_j$ satisfying (\ref{eq:lin_indep_1}) are unique. Note that the left-hand side of (\ref{eq:lin_indep_1}) does not depend on the node $q$, therefore the multipliers are the same for every node, i.e., $\lambda_j(q)=\lambda_j$ for any node $q$. This result can be used to re-write (\ref{eq:stat_mat}) as
  \begin{equation}\label{eq:glob_stat_equi}
    \nabla_X f(\widebar{X},\widebar{\rho})=-\sum_{j\in\mathcal{J}}\lambda_j \nabla_X h_j(\widebar{X}).
  \end{equation}
  Therefore, a fixed point $\widebar{X}$ satisfies the global stationarity condition (\ref{eq:stat}) of the auxiliary problem (\ref{eq:aux_prob}) with corresponding Lagrange multipliers $\{\lambda_j\}_j$.
  
  We now look at the three other KKT conditions. Note that the local primal feasibility condition (\ref{eq:pf_d}) is satisfied globally, since (\ref{eq:pf_d}) and (\ref{eq:pf}) are the same, as was already shown in (\ref{eq:loc_glob}) for any point, so it must also hold for a fixed point. Additionally, $(\widebar{X},\{\lambda_j\}_j)$ satisfies the local versions of both the dual feasibility (\ref{eq:dual_d}) and the complementary slackness condition (\ref{eq:cs_d}), where we replace $C_q^i\widetilde{X}_q^{*}=X^{i+1}$ by $\widebar{X}$ from the fixed point assumption, and $\lambda_j(q)$'s by $\lambda_j$'s for all $j\in\mathcal{J}_I$. Therefore, $(\widebar{X},\{\lambda_j\}_j)$ also satisfies their global counterparts (\ref{eq:dual}) and (\ref{eq:comp_slack}). The pair $(\widebar{X},\{\lambda_j\}_j)$ hence satisfies all the KKT conditions of the auxiliary problem (\ref{eq:aux_prob}).
  
  We now proceed with proving that $\widebar{X}$ also satisfies the KKT conditions of the original fractional program (\ref{eq:prob_frac}), of which the KKT conditions are given by
  \begin{align}
    &\nabla_X \mathcal{L}(X,\bm{\mu})=0,\label{eq:stat_r}\\
    &h_j(X)\leq 0\;\textrm{ $\forall j\in\mathcal{J}_I$, } h_j(X)=0\;\textrm{ $\forall j\in\mathcal{J}_E$,}\label{eq:pf_r}\\
    &\mu_j\geq 0\;\textrm{ $\forall j\in\mathcal{J}_I$,}\label{eq:dual_r}\\
    &\mu_j h_j(X)=0\;\textrm{ $\forall j\in\mathcal{J}_I$,}\label{eq:comp_slack_r}
  \end{align}
  where
  \begin{equation}
    \mathcal{L}(X,\bm{\mu})=r(X)+\sum_{j\in\mathcal{J}}\mu_j h_j(X)
  \end{equation}
  is the Lagrangian of Problem (\ref{eq:prob_frac}). The gradient of $r$ with respect to $X$ can be shown to be equal to
  \begin{equation}
    \nabla_X r(X)=\frac{1}{f_2(X)}\left(\nabla_X f_1(X)-r(X)\nabla_X f_2(X)\right).
  \end{equation}
  By plugging in $\widebar{X}$ and using the fact that $r(\widebar{X})=\widebar{\rho}$, we obtain
  \begin{equation}
    \nabla_X r(\widebar{X})=\frac{1}{f_2(\widebar{X})}\nabla_X f(\widebar{X},\widebar{\rho}),
  \end{equation}
  and by substituting (\ref{eq:glob_stat_equi}) in this equation, we eventually find 
  \begin{equation}
    \nabla_X r(\widebar{X})=-\sum_{j\in\mathcal{J}}\frac{\lambda_j}{f_2(\widebar{X})} \nabla_X h_j(\widebar{X}).
  \end{equation}
  Then, taking $\mu_j\triangleq\lambda_j/f_2(\widebar{X})$ for every $j\in\mathcal{J}$, we observe that we satisfy the stationarity condition (\ref{eq:stat_r}). Additionally, the primal feasibility condition (\ref{eq:pf_r}) is automatically satisfied for $\widebar{X}$ as it is identical to (\ref{eq:pf}). Finally, since $f_2(X)>0$ for any $X\in\mathcal{S}$, the multipliers $\mu_j=\lambda_j/f_2(\widebar{X})$ satisfy (\ref{eq:dual_r}) and (\ref{eq:comp_slack_r}) from the fact that $\lambda_j$'s satisfy (\ref{eq:dual}) and (\ref{eq:comp_slack}). The pair $(\widebar{X},\{\mu_j\}_j)$ therefore satisfies the KKT conditions of the global problem (\ref{eq:prob_frac}).
  \end{proof}

    For the multi-input single-output (MISO) case, the filter $X$ is in fact a vector such that $Q=1$. In this case, (\ref{eq:J_upper_bound_1}) implies that Condition \ref{cond:lin_indep_1} can only be satisfied if we have at most one constraint in Problem (\ref{eq:prob_g}). Similar to \cite{musluoglu2022unifiedp2}, we propose an alternative condition for such cases in order to relax the upper bound on the number of constraints.
    
    \begin{cond}\label{cond:lin_indep_2}
      For a fixed point $\widebar{X}$ of Algorithm \ref{alg:f_dasf}, the elements of the set $\{D_{j,q}(\widebar{X})\}_{j\in\mathcal{J}}$ are linearly independent for any $q$, where
      \begin{equation}\label{eq:Dqj}
        D_{j,q}(\widebar{X})\triangleq \begin{bmatrix}
          \widebar{X}_q^{T}\nabla_{X_q} h_j(\widebar{X})\\
          \sum\limits_{k\in\mathcal{B}_{n_1q}}\widebar{X}_k^{T}\nabla_{X_k} h_j(\widebar{X})\\
          \vdots\\
          \sum\limits_{k\in\mathcal{B}_{n_{|\mathcal{N}_q|}q}}\widebar{X}_k^{T}\nabla_{X_k} h_j(\widebar{X})
        \end{bmatrix},
      \end{equation}
      which is a block-matrix containing $(1+|\mathcal{N}_q|)$ blocks of $Q\times Q$ matrices.
      \end{cond}
\end{subcond}
\renewcommand*{\theHcond}{\thecond}

The size of the matrices $D_{j,q}$ depends on the number of neighbors of node $q$, therefore Condition \ref{cond:lin_indep_2} depends on the topology of the network. In order to satisfy the linear independence of the matrices (\ref{eq:Dqj}), an upper bound analogous to the one in (\ref{eq:J_upper_bound_1}) for Condition \ref{cond:lin_indep_1} can be found to be
\begin{equation}\label{eq:J_upper_bound_2_1}
  J\leq (1+\min_{k\in\mathcal{K}}|\mathcal{N}_k|)Q^2,
\end{equation}
where it is assumed that the pruning function $\mathcal{T}^i(\cdot,q)$ preserves all the links of the updating node $q$. Additionally, we can show that the number of constraints $J$ also needs to satisfy
\begin{equation}\label{eq:J_upper_bound_2_2}
  J\leq \frac{Q^2}{K-1}\sum_{k\in\mathcal{K}}|\mathcal{N}_k|
\end{equation}
for Condition \ref{cond:lin_indep_2} to be satisfied. This second upper bound is related to specific interdependencies between the $D_{j,q}$'s across different nodes $q$, and has been derived in \cite{musluoglu2022unifiedp2}. Combining both (\ref{eq:J_upper_bound_2_1}) and (\ref{eq:J_upper_bound_2_2}), we have
\begin{equation}\label{eq:J_upper_bound_2}
  J\leq \min\left(\frac{Q^2}{K-1}\sum_{k\in\mathcal{K}}|\mathcal{N}_k|,\;(1+\min_{k\in\mathcal{K}}|\mathcal{N}_k|)Q^2\right).
\end{equation}
This is a more relaxed bound than the one in (\ref{eq:J_upper_bound_1}), yet it requires to know the full topology of the network. As in the case of the previous condition, Condition \ref{cond:lin_indep_2} is merely a technical condition, as it is highly likely to be satisfied in practice when (\ref{eq:J_upper_bound_2}) holds. Condition \ref{cond:lin_indep_2} leads to a result analogous to Theorem \ref{thm:stationarity}, given below.
\begin{thm}\label{thm:stationarity_2}
  If Condition \ref{cond:lin_indep_2} is satisfied for a fixed point $\widebar{X}$ of Algorithm \ref{alg:f_dasf}, then $\widebar{X}$ is a stationary point of both (i) the auxiliary problem (\ref{eq:aux_prob}) (or (\ref{eq:aux_prob_g})) for $\rho=r(\widebar{X})$ and (ii) the global problem (\ref{eq:prob_frac}) (or (\ref{eq:prob_g})), satisfying their KKT conditions.
\end{thm}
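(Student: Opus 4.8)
The plan is to reuse the scaffolding of the proof of Theorem~\ref{thm:stationarity} essentially unchanged and to replace only the single step where Condition~\ref{cond:lin_indep_1} was invoked. Concretely, I would first repeat, verbatim, the derivation of the local KKT system (\ref{eq:stat_d})--(\ref{eq:cs_d}) satisfied by $(X^{i+1}=C_q^i\widetilde{X}_q^{i+1},\widetilde{\bm{\lambda}}(q))$, impose the fixed-point assumption $X^{i+1}=X^i=\widebar{X}$ and $\rho^{i+1}=\rho^i=\widebar{\rho}=r(\widebar{X})$ to obtain the local stationarity identity (\ref{eq:stat_equi}), extract its first $M_q$ rows via the structure (\ref{eq:cqi_tree}) of $C_q^i$ to get (\ref{eq:Aq_stat}), and observe that since the fixed-point assumption does not depend on the updating node, (\ref{eq:Aq_stat}) holds for every $q$, which stacks into (\ref{eq:stat_mat}). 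All of this is independent of which of the two conditions is assumed.

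The first genuinely new step is to derive a block-refined counterpart of (\ref{eq:lin_indep_1}). Instead of left-multiplying (\ref{eq:stat_equi}) by the full row $\widetilde{X}_q^{iT}=[X_q^{iT},I_Q,\dots,I_Q]$, I would left-multiply it by the ``block-diagonalised'' version $\textit{BlkDiag}\big(\widebar{X}_q^{T},I_Q,\dots,I_Q\big)$ (with $|\mathcal{N}_q|$ identity blocks), using $C_q^i\widetilde{X}_q^i=\widebar{X}$ from Lemma~\ref{lem:X_in_constraints} so that $X_q^i=\widebar{X}_q$ and the neighbour-blocks of $C_q^i$ carry the matrices $\widebar{X}_k$, $k\in\mathcal{B}_{nq}$. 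This projects (\ref{eq:stat_equi}) onto exactly
\[
  \begin{bmatrix}
    \widebar{X}_q^{T}\nabla_{X_q}f(\widebar{X},\widebar{\rho})\\
    \sum_{k\in\mathcal{B}_{n_1q}}\widebar{X}_k^{T}\nabla_{X_k}f(\widebar{X},\widebar{\rho})\\
    \vdots\\
    \sum_{k\in\mathcal{B}_{n_{|\mathcal{N}_q|}q}}\widebar{X}_k^{T}\nabla_{X_k}f(\widebar{X},\widebar{\rho})
  \end{bmatrix}
  =-\sum_{j\in\mathcal{J}}\lambda_j(q)\,D_{j,q}(\widebar{X}),
\]
with $D_{j,q}$ as in (\ref{eq:Dqj}); summing its blocks recovers (\ref{eq:lin_indep_1}) because the clusters $\mathcal{B}_{nq}$ together with $\{q\}$ partition $\mathcal{K}$, so this is genuinely a refinement. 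Condition~\ref{cond:lin_indep_2} makes $\{D_{j,q}(\widebar{X})\}_{j\in\mathcal{J}}$ linearly independent, hence this identity pins down the $\lambda_j(q)$ uniquely for each updating node $q$.

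The hard part will be upgrading this per-node uniqueness to node-independence, $\lambda_j(q)=\lambda_j$ for all $q$ — which is precisely what (\ref{eq:stat_mat}) requires in order to collapse into the global stationarity condition (\ref{eq:glob_stat_equi}). My approach would be to substitute (\ref{eq:stat_mat}), i.e.\ $\nabla_{X_k}f(\widebar{X},\widebar{\rho})=-\sum_j\lambda_j(k)\nabla_{X_k}h_j(\widebar{X})$, into the neighbour-blocks of the refined identities, and to combine the resulting equations across all admissible updating nodes and prunings (using that any graph-neighbour of $q$ is a tree-neighbour in every admissible pruning) so as to produce, for each edge $(q,p)$ of $\mathcal{G}$, a relation $\sum_{j\in\mathcal{J}}\big(\lambda_j(q)-\lambda_j(p)\big)D_{j,q}(\widebar{X})=0$; Condition~\ref{cond:lin_indep_2} at $q$ then gives $\lambda_j(q)=\lambda_j(p)$ for all $j$, and connectedness of $\mathcal{G}$ propagates this to every pair of nodes. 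The combinatorial bookkeeping of which neighbour-block equations to add (and the verification that the clusters involved are complementary across a cut edge) is the genuine obstacle, and is the analogue of the corresponding argument in \cite{musluoglu2022unifiedp2}. Once $\lambda_j(q)=\lambda_j$ is established, (\ref{eq:stat_mat}) yields (\ref{eq:glob_stat_equi}); primal feasibility transfers from the local to the global problem through (\ref{eq:loc_glob}), dual feasibility and complementary slackness transfer by the same fixed-point substitution as in Theorem~\ref{thm:stationarity}, and the final passage from the auxiliary problem (\ref{eq:aux_prob}) to the fractional program (\ref{eq:prob_frac}) via $\mu_j\triangleq\lambda_j/f_2(\widebar{X})$ (with $f_2(\widebar{X})>0$) is word-for-word that of Theorem~\ref{thm:stationarity}.
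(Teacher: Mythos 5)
Your overall scaffolding is right, and your first new step --- left-multiplying (\ref{eq:stat_equi}) by $\textit{BlkDiag}\bigl(\widebar{X}_q^{T},I_Q,\dots,I_Q\bigr)$ to obtain the block identity whose right-hand side is $-\sum_{j}\lambda_j(q)D_{j,q}(\widebar{X})$ --- is exactly the correct refinement of (\ref{eq:lin_indep_1}). (Note that the paper itself does not write this proof out; it defers the one genuinely new step to \cite[Appendix~B]{musluoglu2022unifiedp2}.) The gap is in the step you yourself flag as hard. In Theorem \ref{thm:stationarity} the left-hand side $\widebar{X}^{T}\nabla_X f(\widebar{X},\widebar{\rho})$ of (\ref{eq:lin_indep_1}) is the same object for every updating node, so per-node uniqueness of the multipliers immediately forces $\lambda_j(q)=\lambda_j$. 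Here the left-hand side of your refined identity is a $(1+|\mathcal{N}_q|)Q\times Q$ block matrix whose cluster partition depends on $q$, so per-node uniqueness buys you nothing about node-independence, and the mechanism you propose for recovering it does not go through as stated. Substituting (\ref{eq:stat_mat}) into the cluster block associated with $\mathcal{B}_{nq}$ yields
\begin{equation*}
\sum_{k\in\mathcal{B}_{nq}}\sum_{j\in\mathcal{J}}\bigl(\lambda_j(k)-\lambda_j(q)\bigr)\,\widebar{X}_k^{T}\nabla_{X_k}h_j(\widebar{X})=0,
\end{equation*}
which (i) mixes the multipliers of \emph{all} nodes in the cluster, not just those of a single neighbour $p$, and (ii) lives in a single $Q\times Q$ block of $D_{j,q}$, whereas Condition \ref{cond:lin_indep_2} only asserts linear independence of the full stacked matrices $D_{j,q}(\widebar{X})$, not of their individual blocks. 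The edge relation $\sum_j(\lambda_j(q)-\lambda_j(p))D_{j,q}(\widebar{X})=0$ you want is therefore not produced by these manipulations; even in the favourable case where $p$ is a leaf of the pruned tree (so $\mathcal{B}_{pq}=\{p\}$) you only obtain $\sum_j(\lambda_j(q)-\lambda_j(p))\widebar{X}_p^{T}\nabla_{X_p}h_j(\widebar{X})=0$, from which Condition \ref{cond:lin_indep_2} alone does not let you conclude.

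What is actually needed is the global argument of \cite[Appendix~B]{musluoglu2022unifiedp2}: the block identities over all updating nodes are collected into a single homogeneous linear system in the multiplier differences, and Condition \ref{cond:lin_indep_2} --- whose counting consequences are precisely (\ref{eq:J_upper_bound_2_1})--(\ref{eq:J_upper_bound_2_2}) --- is what forces all differences to vanish; it is not a purely edge-local propagation-plus-connectedness argument. Once $\lambda_j(q)=\lambda_j$ is in hand, the remainder of your proof (collapsing (\ref{eq:stat_mat}) into (\ref{eq:glob_stat_equi}), transferring primal feasibility, dual feasibility and complementary slackness, and passing to the fractional program via $\mu_j=\lambda_j/f_2(\widebar{X})$ with $f_2(\widebar{X})>0$) is indeed word-for-word the end of the proof of Theorem \ref{thm:stationarity} and is fine.
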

\noindent The proof of this theorem can be obtained by combining the proof of Theorem \ref{thm:stationarity} provided earlier and the proof of an analogous result for the DASF algorithm \cite[Appendix B]{musluoglu2022unifiedp2}, which is why we omit it here. We note that the bound (\ref{eq:J_upper_bound_2}) should only be satisfied \textit{before} pruning the network using $\mathcal{T}^i$. As long as the pruning function $\mathcal{T}^i$ preserves all links between the updating node $q$ and its neighbors, it can be shown that the result from Theorem \ref{thm:stationarity_2} holds \cite{musluoglu2022unifiedp2}. This is also the case where the topology of the network would change dynamically across iterations, such as when link failures happen, as long as (\ref{eq:J_upper_bound_2}) is still satisfied during any of these failures.

Theorems \ref{thm:stationarity} and \ref{thm:stationarity_2} only guarantee that a fixed point of Algorithm \ref{alg:f_dasf} is a stationary point of Problem (\ref{eq:prob_frac}). A stronger result can be obtained if the fixed point minimizes the auxiliary problem (\ref{eq:aux_prob}).

\begin{thm}(see \cite{jagannathan1966some})\label{thm:global_min}
  If a point $\widebar{X}$ minimizes the auxiliary problem (\ref{eq:aux_prob}) for $\rho=r(\widebar{X})$, then it is a global minimizer of Problem (\ref{eq:prob_frac}).
\end{thm}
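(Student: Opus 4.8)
The plan is to exploit the key identity that the value of the auxiliary objective at $\widebar{X}$ with $\rho=\widebar{\rho}\triangleq r(\widebar{X})$ is exactly zero, and then to use the minimality of $\widebar{X}$ for the auxiliary problem to lower-bound $r$ everywhere on $\mathcal{S}$. First I would observe that, since $\widebar{X}$ is by assumption a minimizer of the auxiliary problem (\ref{eq:aux_prob}), it lies in $\mathcal{S}$, so that $f_2(\widebar{X})>0$ and $\widebar{\rho}=r(\widebar{X})=f_1(\widebar{X})/f_2(\widebar{X})$ is well defined. Substituting this into $f(\widebar{X},\widebar{\rho})=f_1(\widebar{X})-\widebar{\rho}\,f_2(\widebar{X})$ immediately yields $f(\widebar{X},\widebar{\rho})=0$, i.e. $g(\widebar{\rho})=0$.

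From here two equivalent routes are available. The direct route: for any feasible $X\in\mathcal{S}$, minimality of $\widebar{X}$ for (\ref{eq:aux_prob}) with $\rho=\widebar{\rho}$ gives $f_1(X)-\widebar{\rho}\,f_2(X)=f(X,\widebar{\rho})\ge f(\widebar{X},\widebar{\rho})=0$; dividing by $f_2(X)>0$ gives $r(X)=f_1(X)/f_2(X)\ge\widebar{\rho}=r(\widebar{X})$ for every $X\in\mathcal{S}$, which is precisely the statement that $\widebar{X}$ is a global minimizer of (\ref{eq:prob_frac}). Alternatively, invoking Lemma \ref{lem:frac_min} (applicable since $\mathcal{S}$ is assumed compact): $g(\widebar{\rho})=0$ forces $\widebar{\rho}=\rho^*$, hence $r(\widebar{X})=\rho^*$ and $\widebar{X}\in\mathcal{X}^*$.

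I expect no substantial obstacle here; the argument is essentially a one-line consequence of the structure $f(X,\rho)=f_1(X)-\rho f_2(X)$ together with positivity of the denominator. The only points deserving care are (i) confirming that $\widebar{X}$ is feasible, so that $\widebar{\rho}$ and the quotient $f_1(\widebar{X})/f_2(\widebar{X})$ are well defined, and (ii) using $f_2(X)>0$ on $\mathcal{S}$ (an assumption carried throughout the paper, cf. (\ref{eq:prob_frac})) when dividing the auxiliary inequality by $f_2(X)$. Everything else is an algebraic rearrangement; in particular the direct route needs neither compactness of $\mathcal{S}$ nor convexity of $f_1,f_2$, while the Lemma-based route uses compactness only through Lemma \ref{lem:frac_min}.
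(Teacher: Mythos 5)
Your direct argument is correct and is exactly the standard proof of this classical result (the paper itself gives no proof, deferring to the cited reference, where the argument is precisely your "direct route"): $f(\widebar{X},\widebar{\rho})=0$ by construction of $\widebar{\rho}$, minimality gives $f_1(X)-\widebar{\rho}f_2(X)\ge 0$ on $\mathcal{S}$, and dividing by $f_2(X)>0$ yields $r(X)\ge r(\widebar{X})$. Your alternative route via Lemma \ref{lem:frac_min} is also valid but needlessly invokes compactness; the direct route is the right one to prefer.
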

\noindent A proof of this theorem can be found in \cite{jagannathan1966some}. Combining Theorems \ref{thm:stationarity}, \ref{thm:stationarity_2}, and \ref{thm:global_min} we can obtain the following result.
\begin{cor}\label{cor:fixed_min}
  Let $\widebar{X}$ be a fixed point of Algorithm \ref{alg:f_dasf} satisfying either Condition \ref{cond:lin_indep_1} or \ref{cond:lin_indep_2}. Then, for $\rho=r(\widebar{X})$, if the auxiliary problem (\ref{eq:aux_prob}) has a unique minimum and no other KKT points, we have $\widebar{X}=X^*$, where $X^*$ is a solution of (\ref{eq:prob_frac}).
\end{cor}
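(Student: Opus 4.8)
The plan is to chain together the three preceding results, so the argument is essentially bookkeeping on top of the heavy lifting already done in Theorems~\ref{thm:stationarity}, \ref{thm:stationarity_2} and \ref{thm:global_min}. First I would invoke the relevant stationarity theorem: since $\widebar{X}$ is a fixed point of Algorithm~\ref{alg:f_dasf} and, by hypothesis, it satisfies either Condition~\ref{cond:lin_indep_1} or Condition~\ref{cond:lin_indep_2}, Theorem~\ref{thm:stationarity} (resp.\ Theorem~\ref{thm:stationarity_2}) guarantees that $\widebar{X}$ satisfies the KKT conditions of the auxiliary problem (\ref{eq:aux_prob}) for the particular choice $\rho=r(\widebar{X})$. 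In other words, $\widebar{X}$ is a genuine KKT point of that auxiliary problem, not merely a point satisfying some weaker stationarity notion.

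Next I would exploit the hypothesis that, for $\rho=r(\widebar{X})$, the auxiliary problem (\ref{eq:aux_prob}) has a unique minimizer and no other KKT points. Since $\mathcal{S}$ is compact and $f(\cdot,\rho)$ is continuous, this minimizer exists; denote it $X_\rho$. Under Assumption~\ref{asmp:kkt} (LICQ for the auxiliary problems), $X_\rho$ is itself a KKT point of (\ref{eq:aux_prob}), so the KKT-point set of the auxiliary problem is exactly the singleton $\{X_\rho\}$. Combining this with the previous paragraph, where we showed that $\widebar{X}$ lies in this KKT-point set, forces $\widebar{X}=X_\rho$; that is, $\widebar{X}$ actually minimizes the auxiliary problem (\ref{eq:aux_prob}) for $\rho=r(\widebar{X})$.

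Finally I would apply Theorem~\ref{thm:global_min}: a point that minimizes the auxiliary problem (\ref{eq:aux_prob}) for $\rho$ equal to its own objective value $r(\widebar{X})$ is a global minimizer of the fractional program (\ref{eq:prob_frac}). Hence $\widebar{X}\in\mathcal{X}^*$, so $\widebar{X}$ is a solution $X^*$ of (\ref{eq:prob_frac}), which is the claim. I expect no serious obstacle here — the corollary is purely about a fixed point, so no convergence of the iterate sequence $(X^i)_i$ is needed — and the only place demanding a little care is the middle step: it is precisely because Theorem~\ref{thm:stationarity}/\ref{thm:stationarity_2} yields a \emph{bona fide} KKT point (rather than a weaker stationary point) that the ``unique minimum, no other KKT points'' hypothesis can be used to pin $\widebar{X}$ down to the minimizer.
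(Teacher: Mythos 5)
Your proposal is correct and follows exactly the paper's own argument: apply Theorem~\ref{thm:stationarity} (or \ref{thm:stationarity_2}) to establish that $\widebar{X}$ is a KKT point of the auxiliary problem (\ref{eq:aux_prob}) for $\rho=r(\widebar{X})$, use the ``unique minimum, no other KKT points'' hypothesis to identify $\widebar{X}$ with that minimizer, and then invoke Theorem~\ref{thm:global_min} to conclude global optimality for (\ref{eq:prob_frac}). No differences worth noting.
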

\begin{proof}
  From Theorems \ref{thm:stationarity} and \ref{thm:stationarity_2}, we saw that $\widebar{X}$ is a KKT point of both (\ref{eq:prob_frac}) and its auxiliary problem (\ref{eq:aux_prob}) for $\rho=r(\widebar{X})$. Since the only KKT point of (\ref{eq:aux_prob}) is its unique minimum, $\widebar{X}$ solves (\ref{eq:aux_prob}) for $\rho=r(\widebar{X})$. From Theorem \ref{thm:global_min}, we conclude that $\widebar{X}$ also solves the global problem (\ref{eq:prob_frac}).
\end{proof}

\subsection{Convergence to Stationary Points and Global Minima}
Conditions \ref{cond:lin_indep_1} and \ref{cond:lin_indep_2} allowed us to show that fixed points of the F-DASF algorithm are stationary points of (\ref{eq:prob_frac}). It remains to show that the sequence $(X^i)_i$ converges to a global minimum of Problem (\ref{eq:prob_frac}), or at least to a stationary point. As the results provided in this subsection, and their proof, are similar to the ones of the original DASF algorithm, we will omit technical details and refer the reader to \cite{musluoglu2022unifiedp2}. To show convergence of $(X^i)_i$ to a single point, we require the following two conditions.

\begin{cond}\label{cond:continuity}
The local auxiliary problems (\ref{eq:loc_aux_prob_compact}) satisfy Assumptions \ref{asmp:well_posed} and \ref{asmp:kkt}.
\end{cond}

\begin{cond}\label{cond:finite_stat}
  The number of stationary points of each local auxiliary problem (\ref{eq:loc_aux_prob}) is finite, or the solver used by Algorithm \ref{alg:f_dasf} to solve the local auxiliary problems (\ref{eq:loc_aux_prob}) can only find a finite subset of the solutions of (\ref{eq:loc_aux_prob}).
\end{cond}

\noindent Condition \ref{cond:continuity} translates to requiring the local auxiliary problems to inherit the same properties of the centralized auxiliary problems (\ref{eq:aux_prob}). This condition is usually satisfied in practice since we already assumed that the centralized auxiliary problems (\ref{eq:aux_prob}) satisfy the assumptions presented in Section \ref{sec:assumptions}, while the local auxiliary problems are compressed versions of these. When Condition \ref{cond:continuity} is satisfied, any accumulation point $\widebar{X}$ of the sequence $(X^i)_i$ is a fixed point of Algorithm \ref{alg:f_dasf} for any $q$ and $\lim_{i\rightarrow+\infty}||X^{i+1}-X^i||=0$. In addition to Condition \ref{cond:continuity}, if Condition \ref{cond:finite_stat} is satisfied, then $(X^i)_i$ converges to a single fixed point $\widebar{X}$. For the proofs in \cite{musluoglu2022unifiedp2} to work in the case of F-DASF as well, we have to establish that all the points of the sequence $(X^i)_i$ lie in a compact set. This can be easily shown to be true for the F-DASF algorithm. Indeed, since the set $\mathcal{S}$ of Problem (\ref{eq:prob_frac}) is compact, and from the fact that all points of the sequence $(X^i)_i$ generated by the F-DASF algorithm lie in $\mathcal{S}$ (Lemma \ref{lem:X_in_constraints}), the sequence $(X^i)_i$ lies in a compact set. 

\begin{rem}
Note that in various fractional problems, the number of stationary points is not finite, for example in the TRO problem in Table \ref{tab:ex_prob}. It can be shown that a solution of the auxiliary problems of the TRO problem is given by an $X$ containing in its columns the $Q$ eigenvectors of $R_{\mathbf{vv}}-\rho^i\cdot R_{\mathbf{yy}}$ corresponding to its $Q$ largest eigenvalues. However, any matrix $XU$, where $U$ is an orthogonal matrix, is also a solution of the auxiliary problem of the TRO. For these cases, Condition \ref{cond:finite_stat} can be relaxed so as to require that the solver used for solving the auxiliary problems (\ref{eq:aux_prob}) can only obtain a finite set of the solutions of (\ref{eq:aux_prob}). For the TRO example, a solver that outputs the $Q$ principal unit norm eigenvectors of $R_{\mathbf{vv}}-\rho^i\cdot R_{\mathbf{yy}}$ is therefore sufficient, as the possible outputs are only different up to a sign change in each column, making the set of possible solutions finite (except for the contrived degenerate case where these $Q$ eigenvalues are not all distinct).
\end{rem}

The final step is to combine the previous results to be able to state the convergence guarantees of the F-DASF algorithm to a stationary point of (\ref{eq:prob_frac}). From Condition \ref{subcond:lin_indep} (either the form a or b), we know that fixed points of the F-DASF algorithm are stationary points of (\ref{eq:prob_frac}) satisfying its KKT conditions in Theorems \ref{thm:stationarity} and \ref{thm:stationarity_2}. Combining this with Conditions \ref{cond:continuity} and \ref{cond:finite_stat}, we can establish that the sequence $(X^i)_i$ converges and $\lim_{i\rightarrow+\infty}X^i=\widebar{X}$, where $\widebar{X}$ is a stationary point of (\ref{eq:prob_frac}) satisfying its KKT conditions.

Additionally, we can obtain stronger results if certain uniqueness assumptions are satisfied. Namely, if the global problem (\ref{eq:prob_frac}) has a unique minimum $X^*$ and no other stationary points, or if the conditions of Corollary \ref{cor:fixed_min} are satisfied, we have $\widebar{X}=X^*$, implying that $(X^i)_i$ converges to $X^*$. Finally, even in cases where these uniqueness assumptions are not met, we can still expect the F-DASF algorithm to converge to a global minimum if all minima are global minima. This is because of the monotonic decrease of $(r(X^i))_i=(\rho^i)_i$ (see Lemma \ref{thm:monotonic}), which implies that the sequence $(X^i)_{i}$ is kicked out of a potential equilibrium point which is not a minimum and cannot return to it, making the fixed points of Algorithm \ref{alg:f_dasf} that are in $\mathcal{X}^*$ the only stable ones.

In general, we see that the F-DASF algorithm converges under similar technical conditions as the DASF algorithm, except for the fact that some of these conditions are required to hold for the auxiliary problems (instead of the original problem) in the F-DASF case.

\section{Simulations}\label{sec:simulations}
In this section, we demonstrate the performance of the F-DASF algorithm in various experimental settings and compare it to the DASF algorithm for various fractional programs. Implementations are provided in \cite{musluoglu2022dsfotoolbox}. Throughout these experiments, we consider network topologies generated randomly using the Erd\H{o}s-R\'enyi model with connection probability $0.8$, and where the pruning function $\mathcal{T}^i(\cdot,q)$ is chosen to be the shortest path. Excluding Section \ref{sec:adaptive}, we consider stationary signals $\mathbf{y}$ and $\mathbf{v}$, following a mixture model:
\begin{align}
  \mathbf{y}(t)&=\Pi_s\cdot \mathbf{s}(t)+\mathbf{n}(t),\label{eq:signal_model_y}\\
  \mathbf{v}(t)&=\Pi_r\cdot \mathbf{r}(t)+\mathbf{y}(t)\nonumber \\
  &=\Pi_r\cdot \mathbf{r}(t)+\Pi_s\cdot \mathbf{s}(t)+\mathbf{n}(t),
\end{align}
with $\mathbf{r}(t)$, $\mathbf{s}(t)\overset{i.i.d.}{\sim}\mathcal{N}(0,\sigma_r^2)$, $\mathbf{n}(t)\overset{i.i.d.}{\sim}\mathcal{N}(0,\sigma_n^2)$ for every entry and time instance $t$. The entries of the mixture matrices $\Pi_s$ and $\Pi_r$ are independent of time and are independently drawn from $\mathcal{N}(0,\sigma_\Pi^2)$. For each experiment, the number of channels $M_k$ of the signals measured at node $k$ are equal for each node, and given by $M/K$, while each node measures $N=10^4$ samples of its local signals at each iteration of the algorithms. Table \ref{tab:table_param} gives an overview of the different parameters selected for the simulations presented below. The main performance metric we use to assess convergence of the F-DASF algorithm is the median squared error (MedSE) $\epsilon$:
\begin{equation}\label{eq:epsilon_error}
  \epsilon(i)=\text{median}\left(\frac{||X^i-X^*||_F^2}{||X^*||_F^2}\right),
\end{equation}
where the median is taken over multiple Monte Carlo runs and $X^*$ denotes an optimal solution of the respective problem, obtained from a centralized solver implementing Dinkelbach's procedure. If the centralized problem has multiple solutions, we select $X^*$ a posteriori as the one that best matches $X^i$ in the final iteration of the F-DASF algorithm. The results we present next have been obtained by taking the median over $100$ Monte Carlo runs, and a stopping criterion for Dinkelbach's procedure used in DASF has been set to be a threshold of $10^{-8}$ on the norm of the difference of two consecutive $\widetilde{X}_q$'s and a maximum number of iterations of $10$. 

We first consider two problems with compact constraint sets in a stationary setting in Sections \ref{sec:rtls} and \ref{sec:tro}. In Section \ref{sec:adaptive}, we will consider time-varying mixture matrices to simulate non-stationarity in an adaptive context, while Section \ref{sec:qol} demonstrates convergence for a problem with a non-compact constraint set.

\begin{table}[!t]
  \renewcommand{\arraystretch}{1.5}
  \caption{Summary of parameters used in the simulations.}
  \label{tab:table_param}
  \centering
  \begin{tabularx}{0.48\textwidth}{ >{\centering\arraybackslash}c | >{\centering\arraybackslash}X | >{\centering\arraybackslash}X | >{\centering\arraybackslash}X}
  \hline
  Experiment & Section \ref{sec:rtls}  & Section \ref{sec:tro}  &  Section \ref{sec:qol}  \\ \hhline{=|=|=|=}
  $Q$ &  $1$ &  $2$   &  $2$ \\ \hline
  $K$ &  \multicolumn{3}{c}{$10$} \\ \hline
  $M$ & $M=50$ & $M=50$ & $M=100$ \\ \hline
  Signal Statistics & $\sigma_r^2=0.5$, $\sigma_n^2=0.2$, $\sigma_\Pi^2=0.3$ & $\sigma_r^2=0.5$, $\sigma_n^2=0.1$, $\sigma_\Pi^2=0.1$   &   $\sigma_r^2=0.5$, $\sigma_n^2=0.2$, $\sigma_\Pi^2=0.2$ \\ \hline
  $N$ & \multicolumn{3}{c}{$10000$} \\ \hline
  Monte Carlo Runs & \multicolumn{3}{c}{$100$} \\ \hline
  \end{tabularx}
\end{table}

\subsection{Regularized Total Least Squares}\label{sec:rtls}
The regularized total least squares (RTLS) problem \cite{sima2004regularized,beck2006finding} is given as
\begin{equation}\label{eq:rtls}
  \begin{aligned}
  \underset{\mathbf{x}\in\mathbb{R}^M}{\text{min. } } & \frac{\mathbb{E}[|\mathbf{x}^T\mathbf{y}(t)-d(t)|^2]}{1+\mathbf{x}^T\mathbf{x}}=\frac{\mathbf{x}^TR_{\mathbf{yy}}\mathbf{x}-2\mathbf{x}^T\mathbf{r}_{\mathbf{y}d}+r_{dd}}{1+\mathbf{x}^T\mathbf{x}}\\
  \textrm{s. t. } & ||\mathbf{x}^TL||^2\leq 1,
  \end{aligned}
\end{equation}
where the variable $X=\mathbf{x}$ is a vector, i.e., $Q=1$. The matrix $R_{\mathbf{yy}}=E[\mathbf{y}(t)\mathbf{y}^T(t)]$ is the covariance matrix of $\mathbf{y}$, and similarly, we have $\mathbf{r}_{\mathbf{y}d}=\mathbb{E}[d(t)\mathbf{y}(t)]$ and $r_{dd}=\mathbb{E}[d^2(t)]$. In this example, (\ref{eq:signal_model_y}) is of the form $\mathbf{y}(t)=\mathbf{p}_s\cdot s(t)+\mathbf{n}(t)$, where $\Pi_s=\mathbf{p}_s$ is a vector and $s$ is a scalar. Moreover, $d$ represents a noisy version of $s$, given by $d(t)=s(t)+w(t)$, where each time sample of $w$ is drawn from $\mathcal{N}(0,0.02)$. Finally, $L$ is a diagonal matrix where each element of the diagonal follows $\mathcal{N}(1,0.1)$. For a fixed $\rho$, the auxiliary problem of (\ref{eq:rtls}) is
\begin{equation}\label{eq:rtls_aux}
  \begin{aligned}
    \underset{\mathbf{x}\in\mathbb{R}^{M}}{\text{minimize }} \quad &\mathbf{x}^TR_{\mathbf{yy}}\mathbf{x}-2\mathbf{x}^T\mathbf{r}_{\mathbf{y}d}+r_{dd}-\rho\cdot(1+\mathbf{x}^T\mathbf{x})\\
  \textrm{subject to} \quad & ||\mathbf{x}^TL||^2\leq 1.
  \end{aligned}
\end{equation}
The local auxiliary problem at iteration $i$ solved at the updating node $q$ in the F-DASF algorithm is then given by
\begin{equation}\label{eq:rtls_aux_loc}
  \resizebox{.48\textwidth}{!}{%
  $\begin{aligned}
  \underset{\widetilde{\mathbf{x}}_q\in\mathbb{R}^{\widetilde{M}_q}}{\text{min. } } & \widetilde{\mathbf{x}}_q^TR^i_{\widetilde{\mathbf{y}}_q\widetilde{\mathbf{y}}_q}\widetilde{\mathbf{x}}_q-2\widetilde{\mathbf{x}}_q^T\mathbf{r}^i_{\widetilde{\mathbf{y}}_qd}+r_{dd}-\rho^i(1+\widetilde{\mathbf{x}}_q^T\widetilde{I}_q^{i}\widetilde{I}_q^{iT}\widetilde{\mathbf{x}}_q)\\
  \textrm{s. t. } & ||\widetilde{\mathbf{x}}_q^T\widetilde{L}_q^i||^2\leq 1,
  \end{aligned}$%
  }
\end{equation}
where we have $R^i_{\widetilde{\mathbf{y}}_q\widetilde{\mathbf{y}}_q}=\mathbb{E}[\widetilde{\mathbf{y}}_q^i(t)\widetilde{\mathbf{y}}_q^{iT}(t)]$ and $\mathbf{r}^i_{\widetilde{\mathbf{y}}_qd}=\mathbb{E}[\widetilde{\mathbf{y}}_q^i(t)d(t)]$, with $\widetilde{\mathbf{y}}_q^i$ the locally available signal defined in (\ref{eq:tree_data}). The matrices $\widetilde{I}_q^i$ and $\widetilde{L}_q^i$ are the locally available versions of $I_M$ and $L$, respectively, obtained by taking $B=I_M$ and $B=L$ and computing $\widetilde{B}_q^i$ as in (\ref{eq:B_tilde}). Problem (\ref{eq:rtls_aux_loc}) is a quadratic problem with quadratic constraints and can be solved by a solver implementing, for example, an interior-point method. While F-DASF will only solve one instance of (\ref{eq:rtls_aux_loc}), the DASF algorithm will solve multiple problems of the form (\ref{eq:rtls_aux_loc}) at each iteration, following the steps of the Dinkelbach algorithm. Figure \ref{fig:rtls_prob_plot} shows the results of the comparison between F-DASF and DASF. We see that, by construction, the F-DASF algorithm requires solving fewer problems, yet the convergence rates are similar between F-DASF and DASF. More specifically, the F-DASF algorithm requires solving $5$ times fewer auxiliary problems (on average over iterations of median values) than the DASF algorithm.

Note that the sharp change in the convergence plot at iteration $i=10$ corresponds to the number of nodes $K=10$, and is due to the random initialization of $X^0$, as explained in Section \ref{sec:prob_params}.

\begin{figure}[t]
  \includegraphics[width=0.48\textwidth]{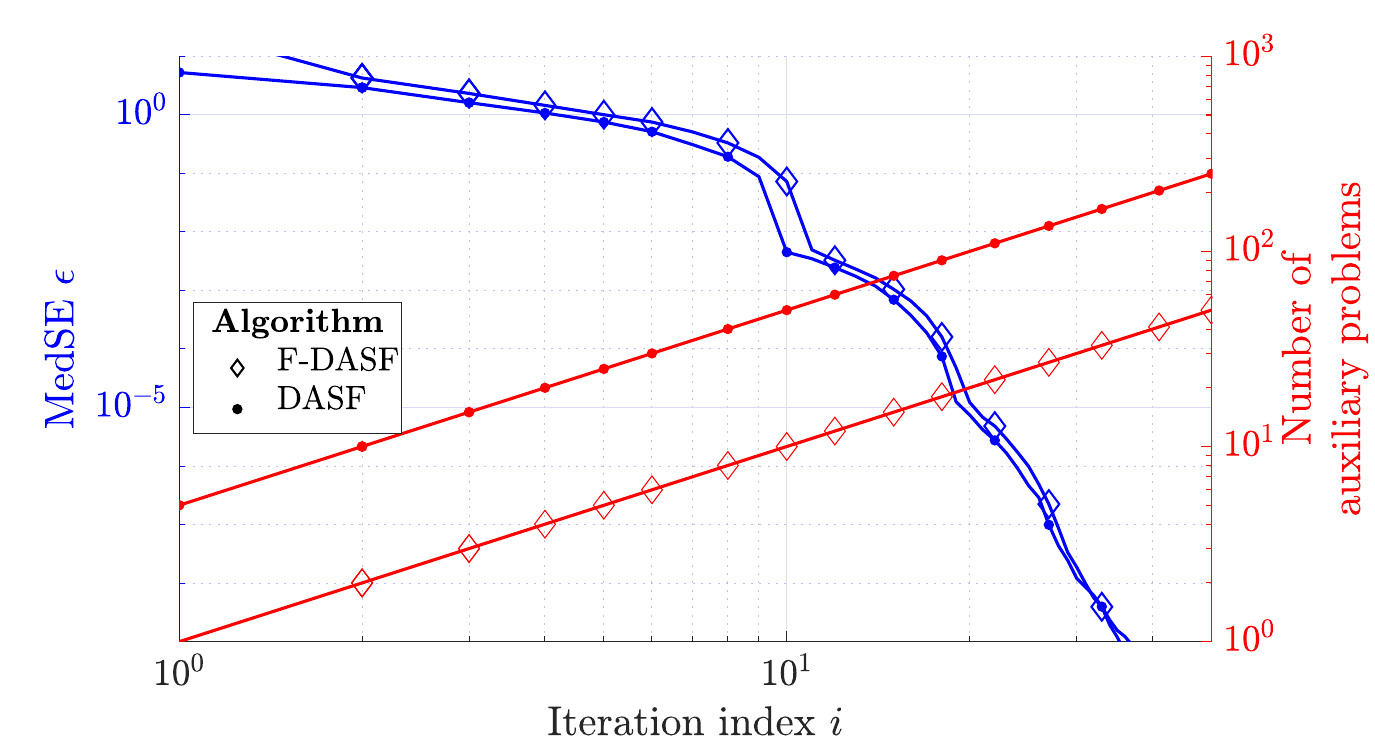}
  \caption{Convergence and cumulative computational cost comparison between the proposed F-DASF algorithm and the DASF algorithm when solving Problem (\ref{eq:rtls}).}
  \label{fig:rtls_prob_plot}
\end{figure}

\subsection{Trace Ratio Optimization}\label{sec:tro}
In this experiment, we consider the trace ratio optimization (TRO) problem and compare an F-DASF implementation with a DASF implementation\footnote{We note that the F-DASF implementation of TRO leads to the so-called DTRO algorithm in \cite{musluoglu2021distributed}, which can be viewed as a special case of the more general F-DASF algorithm.}. Here, we compare its performance to the DASF method. The TRO problem can be written as
\begin{equation}\label{eq:prob_tro}
  \begin{aligned}
    \underset{X\in\mathbb{R}^{M\times Q}}{\text{maximize }} \quad &\frac{\mathbb{E}[||X^T\mathbf{v}(t)||^2]}{\mathbb{E}[||X^T\mathbf{y}(t)||^2]}=\frac{\text{tr}(X^TR_{\mathbf{vv}}X)}{\text{tr}(X^TR_{\mathbf{yy}}X)}\\
  \textrm{subject to} \quad & X^TX=I_Q,
  \end{aligned}
\end{equation}
with $R_{\mathbf{vv}}=E[\mathbf{v}(t)\mathbf{v}^T(t)]$ and $R_{\mathbf{yy}}=E[\mathbf{y}(t)\mathbf{y}^T(t)]$, while for a given $\rho$, its auxiliary problem takes the form
\begin{equation}\label{eq:prob_tro_aux}
  \begin{aligned}
    \underset{X\in\mathbb{R}^{M\times Q}}{\text{maximize }} \quad &\text{tr}(X^TR_{\mathbf{vv}}X)-\rho\cdot\text{tr}(X^TR_{\mathbf{yy}}X)\\
  \textrm{subject to} \quad & X^TX=I_Q.
  \end{aligned}
\end{equation}
Considering $R_{\mathbf{yy}}$ and $R_{\mathbf{vv}}$ to be positive definite, a solution of (\ref{eq:prob_tro_aux}) is given by
\begin{equation}
  X_{\rho}=\text{EVD}_Q(R_{\mathbf{vv}}-\rho\cdot R_{\mathbf{yy}}),
\end{equation}
where $\text{EVD}_Q$ returns the $Q$ eigenvectors of $R_{\mathbf{vv}}-\rho\cdot R_{\mathbf{yy}}$ corresponding to its $Q$ largest eigenvalues. If the $Q+1$ eigenvalues are all distinct, problem (\ref{eq:prob_tro_aux}) satisfies the convergence assumptions. The local auxiliary problem solved using the F-DASF algorithm at each iteration $i$ and updating node $q$ is
\begin{equation}\label{eq:loc_prob_tro_aux}
  \begin{aligned}
    \underset{\widetilde{X}_q\in\mathbb{R}^{\widetilde{M}_q\times Q}}{\text{maximize }} \quad &\text{tr}(\widetilde{X}_q^TR^i_{\widetilde{\mathbf{v}}_q\widetilde{\mathbf{v}}_q}\widetilde{X}_q)-\rho^i\cdot\text{tr}(\widetilde{X}_q^TR^i_{\widetilde{\mathbf{y}}_q\widetilde{\mathbf{y}}_q}\widetilde{X}_q)\\
  \textrm{subject to} \quad & \widetilde{X}_q^TC_q^{iT}C_q^i\widetilde{X}_q=I_Q,
  \end{aligned}
\end{equation}
with $R^i_{\widetilde{\mathbf{v}}_q\widetilde{\mathbf{v}}_q}=E[\widetilde{\mathbf{v}}^i_q(t)\widetilde{\mathbf{v}}^{iT}_q(t)]$ and $R^i_{\widetilde{\mathbf{y}}_q\widetilde{\mathbf{y}}_q}=E[\widetilde{\mathbf{y}}^i_q(t)\widetilde{\mathbf{y}}^{iT}_q(t)]$, where $\widetilde{\mathbf{y}}_q^i$ and $\widetilde{\mathbf{v}}_q^i$ are defined as in (\ref{eq:tree_data}). A solution $\widetilde{X}_q^{*}$ of (\ref{eq:loc_prob_tro_aux}) is given by $\text{GEVD}_Q(R^i_{\widetilde{\mathbf{v}}_q\widetilde{\mathbf{v}}_q}-\rho^i\cdot R^i_{\widetilde{\mathbf{y}}_q\widetilde{\mathbf{y}}_q},C_q^{iT}C_q^i)$, where $\text{GEVD}_Q(C_1,C_2)$ returns the $Q$ generalized eigenvectors of the pair $(C_1,C_2)$ corresponding to its largest generalized eigenvalues. The DASF algorithm solves (\ref{eq:loc_prob_tro_aux}) multiple times at each updating node until the stopping criterion of the Dinkelbach procedure has been achieved. A comparison of the F-DASF and DASF algorithms is provided in Figure \ref{fig:tro_prob_plot}, where we again observe similar MedSE values per iteration for both methods, while the DASF algorithm requires solving $4.74$ times more auxiliary problems on average (over iterations of median values) than the F-DASF method.

\begin{figure}[t]
  \includegraphics[width=0.48\textwidth]{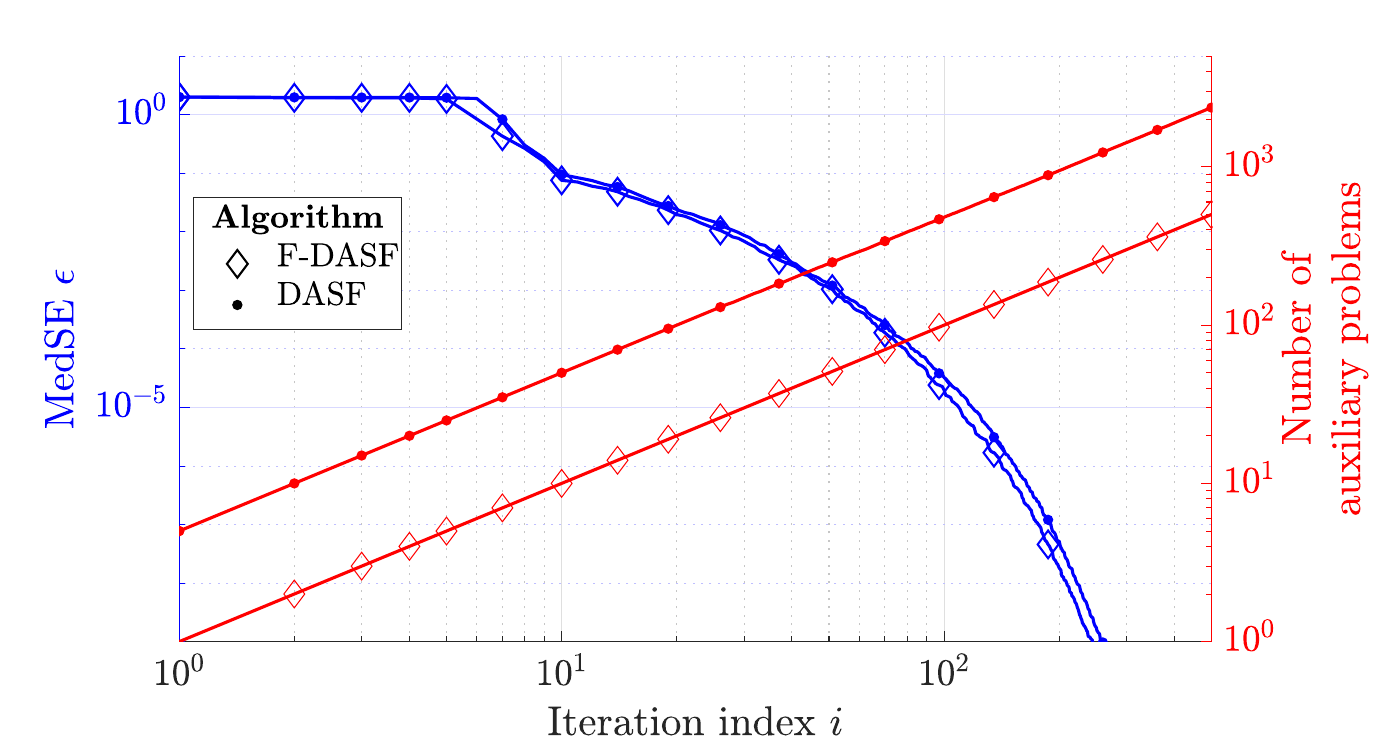}
  \caption{Convergence and cumulative computational cost comparison between the proposed F-DASF algorithm and the DASF algorithm when solving Problem (\ref{eq:prob_tro}).}
  \label{fig:tro_prob_plot}
\end{figure}

\subsection{F-DASF in an Adaptive Setting}\label{sec:adaptive}
We now consider the same settings as in Section \ref{sec:tro}, however, the signal models are now given by
\begin{align}
  \mathbf{y}(t)&=\Pi_s(t)\cdot \mathbf{s}(t)+\mathbf{n}(t),\label{eq:signal_model_y_ada}\\
  \mathbf{v}(t)&=\Pi_r(t)\cdot \mathbf{r}(t)+\Pi_s(t)\cdot \mathbf{s}(t)+\mathbf{n}(t),
\end{align}
i.e., the mixture matrices $\Pi_s$ and $\Pi_r$ are now time-dependent. In particular, we have $\Pi_s(t)=\Pi_{s,0}\cdot(1-p(t))+(\Pi_{s,0}+\Delta_{s})\cdot p(t)$, where $p$ is given in Figure \ref{fig:tro_prob_plot_adaptive} and the elements of $\Pi_{s,0}$ and $\Delta_s$ are independently drawn from $\mathcal{N}(0,0.1)$ and $\mathcal{N}(0,10^{-3})$ respectively, while $\Pi_r(t)$ is defined in the same way as $\Pi_s(t)$, except the fact that the elements of $\Pi_{r,0}$ are independently drawn from $\mathcal{N}(0,0.5)$. Therefore, the signals $\mathbf{y}$ and $\mathbf{v}$ are not stationary anymore, which implies that $X^*$ is time-dependent. In particular, at each iteration $i$, we estimate the solution $X^{*i}$ at iteration $i$ using $N=10^3$ time samples of $\mathbf{y}$ and $\mathbf{v}$ by solving
\begin{equation}\label{eq:prob_tro_adaptive}
  \begin{aligned}
    \underset{X\in\mathbb{R}^{M\times Q}}{\text{maximize }} \quad &\frac{\text{tr}(X^T\widehat{R}^i_{\mathbf{vv}}X)}{\text{tr}(X^T\widehat{R}^i_{\mathbf{yy}}X)}\\
  \textrm{subject to} \quad & X^TX=I_Q,
  \end{aligned}
\end{equation}
using the Dinkelbach procedure, where
\begin{align}
  \widehat{R}^i_{\mathbf{yy}}=\frac{1}{N}\sum_{\tau=0}^{N-1}\mathbf{y}(\tau+iN)\mathbf{y}^T(\tau+iN),
\end{align}
 and similarly for $\widehat{R}^i_{\mathbf{vv}}$. The MedSE $\epsilon$ is then given by
\begin{equation}
  \epsilon(i)=\text{median}\left(\frac{||X^i-X^{*i}||^2_F}{||X^{*i}||^2_F}\right).
\end{equation}
Figure \ref{fig:tro_prob_plot_adaptive} shows the value of $\epsilon$ over time $t$ with $i=\lfloor t/N\rfloor$, where we see that the F-DASF algorithm is able to track slow changes in the signal statistics of $\mathbf{y}$ and $\mathbf{v}$ and can adapt to abrupt changes as well, which is characterized by an initial jump in the MedSE value that gradually decreases. Note that $\epsilon$ settles around certain values due to the fact that $X^*$ is time-dependent, with a higher MedSE settling value for faster rates of change (i.e., steeper slope of $p$) in the signal statistics.

\begin{figure}[t]
  \includegraphics[width=0.48\textwidth]{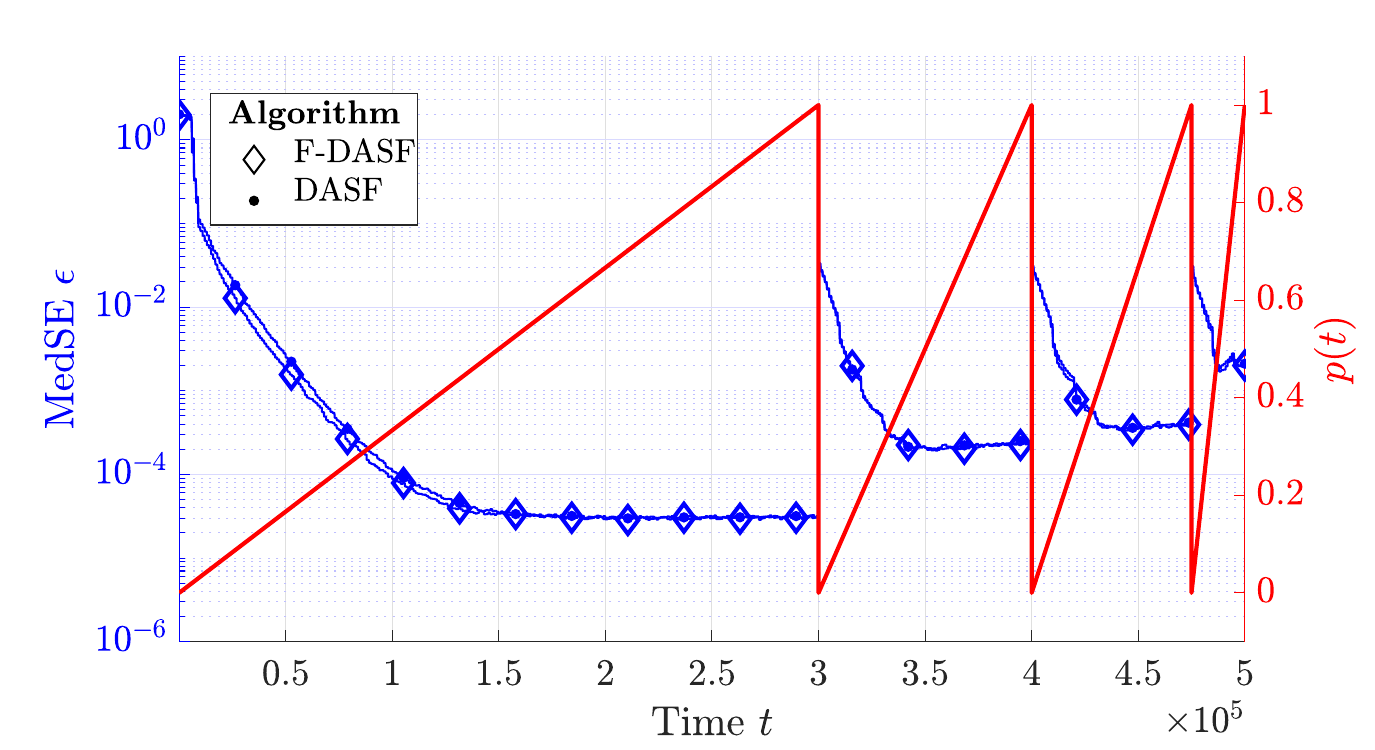}
  \caption{MedSE of the DASF and F-DASF algorithms when solving Problem (\ref{eq:prob_tro}) in an adaptive setting. The relationship between the time $t$ and the iterations $i$ is given by $i=\lfloor t/N\rfloor$.}
  \label{fig:tro_prob_plot_adaptive}
\end{figure}

\subsection{Quadratic over Linear}\label{sec:qol}
To demonstrate the generic nature of the (F-)DASF algorithm, let us consider the following arbitrary toy problem
\begin{equation}\label{eq:prob_qol}
  \begin{aligned}
    \underset{X\in\mathbb{R}^{M\times Q}}{\text{minimize }} \quad &\frac{\mathbb{E}[||X^T\mathbf{y}(t)||^2]+\text{tr}(X^TA)}{\text{tr}(X^TB)+c}\\
  \textrm{subject to} \quad & \text{tr}(X^TB)+c>0,
  \end{aligned}
\end{equation}
where $R_{\mathbf{yy}}=E[\mathbf{y}(t)\mathbf{y}^T(t)]$ is the covariance matrix of $\mathbf{y}$, assumed to be positive definite. In this example, every entry of the matrices $A\in\mathbb{R}^{M\times Q}$ and $B\in\mathbb{R}^{M\times Q}$ have been independently drawn from $\mathcal{N}(0,1)$, while $c$ is taken such that the problem is feasible\footnote{It can be shown that the problem is feasible, i.e., the solution is real, if $c$ satisfies either $2c\geq \text{tr}(A^TR_{\mathbf{yy}}^{-1}B)+\sqrt{\text{tr}(A^TR_{\mathbf{yy}}^{-1}A)\text{tr}(B^TR_{\mathbf{yy}}^{-1}B)}$ or $2c\leq \text{tr}(A^TR_{\mathbf{yy}}^{-1}B)-\sqrt{\text{tr}(A^TR_{\mathbf{yy}}^{-1}A)\text{tr}(B^TR_{\mathbf{yy}}^{-1}B)}$.}. It can be shown that the solution of Problem (\ref{eq:prob_qol}) has the form
\begin{equation}\label{eq:sol_qol}
  X^*=-\frac{1}{2}R_{\mathbf{yy}}^{-1}(A+\mu\cdot B),
\end{equation}
where $\mu$ is a scalar depending on $R_{\mathbf{yy}}$, $A$, $B$, and $c$. Note that the constraint $\text{tr}(X^TB)+c>0$ enforces the requirement $f_2(X)>0$, but does not constitute a compact set since it is open and unbounded, implying that neither the Dinkelbach procedure nor the F-DASF algorithm have a guarantee of convergence to an optimal point. Therefore, the convergence of the F-DASF (and DASF) algorithm will be assessed by comparing $X^i$'s to $X^*$ given in (\ref{eq:sol_qol}).

For a given $\rho$, the auxiliary problem of (\ref{eq:prob_qol}) is
\begin{equation}\label{eq:prob_qol_aux}
  \begin{aligned}
    \underset{X\in\mathbb{R}^{M\times Q}}{\text{minimize }} \quad &\text{tr}(X^TR_{\mathbf{yy}}X)+\text{tr}(X^TA)-\rho\cdot(\text{tr}(X^TB)+c)\\
  \textrm{subject to} \quad & \text{tr}(X^TB)+c>0.
  \end{aligned}
\end{equation}
If $R_{\mathbf{yy}}$ is positive definite, problem (\ref{eq:prob_qol_aux}) is convex and has a unique solution given by
\begin{equation}\label{eq:sol_qol_aux}
  X_{\rho}=\frac{1}{2}R_{\mathbf{yy}}^{-1}(\rho\cdot B-A).
\end{equation}

At each iteration $i$ of the F-DASF algorithm, the updating node $q$ solves its local auxiliary problem (\ref{eq:loc_aux_prob}) given by
\begin{equation}\label{eq:loc_prob_qol_aux}
\resizebox{.48\textwidth}{!}{%
    $\begin{aligned}
      \underset{\widetilde{X}_q\in\mathbb{R}^{\widetilde{M}_q\times Q}}{\text{minimize }} \quad &\text{tr}(\widetilde{X}_q^TR^i_{\widetilde{\mathbf{y}}_q\widetilde{\mathbf{y}}_q}\widetilde{X}_q)+\text{tr}(\widetilde{X}_q^T\widetilde{A}_q^i)-\rho^i\cdot(\text{tr}(\widetilde{X}_q^T\widetilde{B}_q^i)+c)\\
      \\
    \textrm{subject to} \quad & \text{tr}(\widetilde{X}_q^T\widetilde{B}_q^i)+c>0,
    \end{aligned}$%
    }
\end{equation}
where $R^i_{\widetilde{\mathbf{y}}_q\widetilde{\mathbf{y}}_q}=E[\widetilde{\mathbf{y}}^i_q(t)\widetilde{\mathbf{y}}^{iT}_q(t)]$ is the covariance matrix of the locally available stochastic signal $\widetilde{\mathbf{y}}_q^i$ at node $q$ defined in (\ref{eq:tree_data}), while $\widetilde{A}_q^i$ and $\widetilde{B}_q^i$ are the locally available deterministic matrices, as defined in (\ref{eq:B_tilde}). The solution $\widetilde{X}_q^{*}$ of (\ref{eq:loc_prob_qol_aux}) is obtained by replacing $(\rho, R_{\mathbf{yy}}, A, B)$ by $(\rho^i, R^i_{\widetilde{\mathbf{y}}_q\widetilde{\mathbf{y}}_q}, \widetilde{A}_q^i, \widetilde{B}_q^i)$ in (\ref{eq:sol_qol_aux}), since both (\ref{eq:prob_qol_aux}) and (\ref{eq:loc_prob_qol_aux}) have the same form. In contrast, the DASF algorithm will solve multiple problems (\ref{eq:loc_prob_qol_aux}) at each iteration $i$ and node $q$ as it will apply Dinkelbach's procedure on a compressed version of (\ref{eq:prob_qol}). Figure \ref{fig:qol_prob_plot} shows a comparison of both the MedSE values and the cumulative computational cost between the proposed F-DASF algorithm and the existing DASF method. Despite the fact that Problem (\ref{eq:prob_qol}) does not have a compact set, we see that convergence can still be achieved to the optimal solution $X^*$ given in (\ref{eq:sol_qol}), showing that the F-DASF algorithm can still be used to solve problems with non-compact constraint sets. The F-DASF algorithm solves an average value (over iterations of median values) of $5.77$ times fewer auxiliary problems than DASF for the case of Problem (\ref{eq:prob_qol}), while again obtaining similar convergence results.

\begin{figure}[t]
  \includegraphics[width=0.48\textwidth]{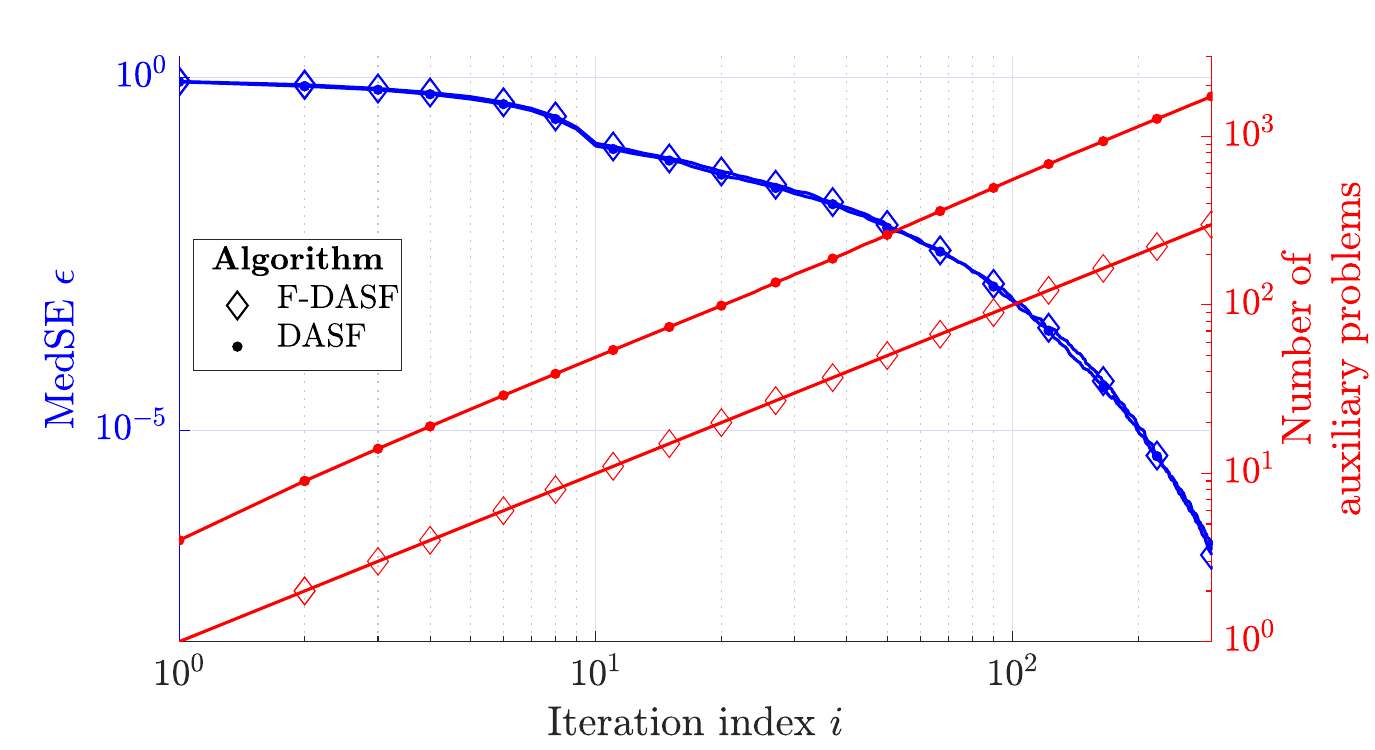}
  \caption{Convergence and cumulative computational cost comparison between the proposed F-DASF algorithm and the DASF algorithm when solving Problem (\ref{eq:prob_qol}).}
  \label{fig:qol_prob_plot}
\end{figure}

\section{Conclusion}
We have proposed the F-DASF algorithm which exploits the structure and properties of a fractional programming method to significantly reduce the computational cost of the DASF method applied to problems with fractional objectives. In particular, the DASF method requires solving a fractional program at each iteration while the F-DASF method only requires a partial solution, implying significantly fewer computations. Despite this reduced number of computations, the proposed method is shown to converge under similar assumptions as the DASF algorithm, and at similar convergence rates. The results obtained in this paper also show that partial solutions can be sufficient for the DASF method to converge and open the way for further studies on other families of problems fitting the DASF framework where computational cost reductions can be obtained by solving optimization problems only partially at each iteration.




\bibliographystyle{IEEEtran}
\bibliography{IEEEabrv,IEEEexample}

\end{document}